\def\BibTeX{{\rm B\kern-.05em{\sc i\kern-.025em b}\kern-.08em
		T\kern-.1667em\lower.7ex\hbox{E}\kern-.125emX}}
\newcommand{\btitle}[1]{\noindent\underline{\textbf{#1}}}
\newtheorem{observation}{Observation}
\newcommand{\ie}{\emph{i.e.,}\xspace}
\newcommand{\eg}{\emph{e.g.,}\xspace}
\newcommand{\resp}{\emph{resp.,}\xspace}
\newcommand{\etal}{\emph{et al.}\xspace}
\newcommand{\eat}[1]{}
\def\EndOfProof{\nolinebreak\ \hfill\rule{2.0mm}{2.0mm}}
\begin{document}
\title{An Efficient Secure Dynamic Skyline Query Model\thanks{corresponding author: Hui Li, hli@xidian.edu.cn}}
\author{Weiguo Wang\inst{1}\and
	Hui Li\inst{1}\orcidID{0000-0003-2382-6289} \and
	Yanguo Peng\inst{2}\and
	Sourav S Bhowmick\inst{3}\and
	Peng Chen\inst{1}\and
	Xiaofeng Chen\inst{1}\and
	Jiangtao Cui\inst{2}}
\institute{School of Cyber Engineering, Xidian University, China 
	\email{\{wgwang,pchen97\}@stu.xidian.edu.cn, \{hli,xfchen\}@xidian.edu.cn}\and 
	School of Computer Science and Technology, Xidian University, China
	\email{\{ygpeng,cuijt\}@xidian.edu.cn}\and
	School of Computer Science and Engineering, Nanyang Technological University, Singapore
	\email{assourav@ntu.edu.sg}}
%
%
%
%
\maketitle              
\begin{abstract}
It is now cost-effective to outsource large dataset and perform query over the cloud. However, in this scenario, there exist serious security and privacy issues that sensitive information contained in the dataset can be leaked. The most effective way to address that is to encrypt the data before outsourcing. Nevertheless, it remains a grand challenge to process queries in ciphertext efficiently. In this work, we shall focus on solving one representative query task, namely \textit{dynamic skyline query}, in a secure manner over the cloud. However, it is difficult to be performed on encrypted data as its dynamic domination criteria require both subtraction and comparison, which cannot be directly supported by a single encryption scheme efficiently.
To this end, we present a novel framework called \textsc{scale}. It works by transforming traditional dynamic skyline domination into pure comparisons. The whole process can be completed in single-round interaction between user and the cloud. We theoretically prove that the outsourced database, query requests, and returned results are all kept secret under our model. Moreover, we also present an efficient strategy for dynamic insertion and deletion of stored records. Empirical study over a series of datasets demonstrates that our framework improves the efficiency of query processing by nearly\textbf{ three orders of magnitude} compared to the state-of-the-art.

\keywords{skyline, secure, cloud, query}
\end{abstract}
\section{Introduction}\label{sec:intro}
With the rapid expansion in data volumes, many individuals and organizations are increasingly inclined to outsource their data to public cloud services since they provide a cost-effective way to support large-scale data storage and query processing.
As a major type of query and fundamental building block for various applications, \emph{skyline query}~\cite{borzsony2001skyline} has become an important issue in database research for extracting interesting objects from multi-dimensional datasets. The skyline query processing is
widely adopted in many applications that require multi-criteria decision making such as market research~\cite{DBLP:conf/vldb/DellisS07}, location based systems~\cite{DBLP:conf/icde/KriegelRS10}, web services study~\cite{DBLP:conf/www/AlrifaiSR10}, etc. The \textit{skyline operator} filters out a set of interesting points based on a group of evaluation criteria from a large set of points. A point is considered as interesting, if there does not exist a point that is at least as good in all criteria and better in at least one criteria.

However, similar to other types of query, outsourcing skyline query workload to a public cloud will inevitably raise privacy issues. Since a real-world database may often contain sensitive information such as personal electronic mails, health records, financial transactions, etc., a cloud service provider may illegally spy on the data and invade the privacy of the data owner and users.

In this paper, we focus on the problem of \emph{secure} skyline querying on the cloud platform aiming to protect the security of outsourced data, query request and results. Secure query processing on encrypted data is an important research field in outsourcing computation and has been extensively studied during recent years~\cite{DBLP:conf/infocom/SunZLH17,Liu2017Secure}. For instance, fully homomorphic encryption schemes \cite{gentry2009fully} ensure strong security while enabling arbitrary computations on the encrypted data. Modular Order-preserving encryption \cite{boldyreva2011order,DBLP:conf/asiacrypt/ChatterjeeD15} provides an intuitive security model which supports comparison over the ciphertext without decryption. Despite the promising achievements in the area of secure query processing, it remains a grand challenge for processing \emph{dynamic} skyline queries over ciphertext, where the \textit{skyline operator} is executed with respect to some query point~\cite{DBLP:conf/sigmod/PapadiasTFS03}. The main reason for the problem is as follows. Given a query request, a dynamic skyline query requires performing \emph{both} comparison and distance evaluation online simultaneously. Unfortunately, accomplishing this task over ciphertext cannot be realized efficiently via existing encryption schemes.

For instance, suppose that a medical institution wishes to outsource its electronic diabetes records to some public cloud service, resisting to leak the content of the records to the cloud server. An electronic diabetes record consists of a series of attributes, including \textit{ID}, \textit{age}, \textit{FBGL (fasting blood glucose level)}, etc. Let $P=\{p_1,\ldots,p_n\}$ denote a set of electronic diabetes records. When the medical institution receives a new record $q$, it expects the cloud server to retrieve a similar record to enhance and personalize the treatment for the new patient $q$. However, it is usually difficult or even impossible to uniformly assign weights for all the attributes to return the nearest neighbor (\eg $p_i$ is the nearest if only \textit{age} is involved while $p_j$ is the nearest if only \textit{FBGL} is taken into account). In light of that, dynamic skyline query provides all possible Pareto records that are not dominated by any other ones. Given a query $q$, we can compute the difference between each attribute for $p_i$ and $q$. Let $t_i$ be the difference tuple between $p_i$ and $q$, and $t_i[j] = |p_i[j] - q[j]|$ for each dimension $j$. An object $t_i$ dominates $t_j$ if it is better than $t_j$ in at least one dimension and not worse in every other dimensions. If an object cannot be dominated by any other object, this object is one of the skyline points that needs to be returned. As shown in Fig.~\ref{skylineexample}, there are five patient records $p_1,\ldots,p_5$. Given a query record $q$, we calculate $t_1,\ldots,t_5$ and can easily identify the skyline points as $t_5$ and $t_2$. Therefore, $p_5$ and $p_2$ are the results for the dynamic skyline query w.r.t $q$.

\begin{figure}[t]
	\center
	\includegraphics[width=0.6\columnwidth]{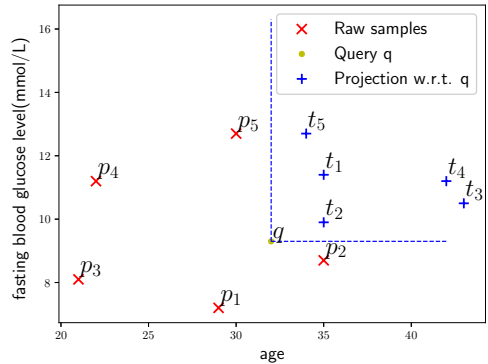}
	\vspace{-2ex}\caption{Dynamic Skyline Query Example.}
	\label{skylineexample}\vspace{-3ex}
\end{figure}

Notably, in the above example, a dynamic skyline query requires performing both subtraction and comparison online. As there is no practical encryption scheme supporting both operators over ciphertext, existing model employs secure multiparty computation over at least to third-party non-collusion clouds and processes the query with multiple rounds of interactions. In this work, we present a novel framework called \textsc{scale} (\textbf{S}e\textbf{C}ure dyn\textbf{A}mic sky\textbf{L}ine qu\textbf{E}rying) by transforming traditional skyline domination criteria, which requires both subtraction and comparison, into comparison only. In this way, we are able to present a new scheme that can support dynamic skyline query over ciphertext without any help from a second cloud and can be completed in a single-round interaction between user and the cloud. We theoretically prove that the outsourced database, query requests, and returned results are all kept secret under our model. Empirical study over four datasets including both synthetic and real-world ones demonstrate that our framework outperforms a state-of-the-art method by nearly three orders of magnitude. Notably, as a special case of dynamic skyline query, skyline computation can also be processed securely and efficiently under our model (with trivial modification).

In summary, this work makes the following contributions.
\begin{itemize}
	\item We propose a new scheme to encrypt the outsourced database and query request. Based on the scheme, dynamic skyline query can be answered without decrypting the database or the query. Within the scheme, the cloud server and data user need only one interaction during the query.
	\item We theoretically prove that our model is secure if the cloud is curious-but-honest.
	\item In addition to the secure query scheme, we also present an efficient mechanism for modifications over existing database records, including insertion, deletion and updating.
	\item We also theoretically show that the skyline points can be computed efficiently and correctly.
	\item Empirical study over both synthetic and real-world datasets justify that our model is superior to the state-of-the-art w.r.t the query response time, by more than \textbf{\textit{3 orders of magnitude faster}}.
\end{itemize}

The rest of this paper is organized as follows. In Section \ref{sec:RelatedWork} we conduct a literature review for the related work.
In Section \ref{sec:PROBLEMDEFINITION} we formally present the problem definition and system model for this work. The detailed designs of encryption scheme and query framework are discussed in Section \ref{sec:SecureSkyline}. Empirical study and corresponding results are shown in Section \ref{sec:Experiments}. In Section
\ref{sec:Conclusion}, we conclude this work.

\vspace{0ex}\section{Related Work}
\label{sec:RelatedWork}
\vspace{0ex}\subsection{Skyline Query in Plaintext}

The skyline query is particularly important for several applications
involving multi-criteria decision making. The computation of the skyline is equivalent to determining the maximal vector
problem in computational geometry~\cite{DBLP:journals/jacm/KungLP75,compGeo1985}, or equivalently the Pareto optimal set~\cite{compGeo1985} problem in
operations research.
Since \cite{DBLP:journals/jacm/KungLP75} earliest studied the method and complexity of skyline computation (\ie \textit{static} skyline query)\footnote{In the following of this work, we shall refer to static skyline query as skyline computation.},
it has been extensively studied in the database field. Block-Nested-Loop~\cite{borzsony2001skyline}, Divide-and-Conquer~\cite{Chomicki2003Skyline},
Nearest Neighbor~\cite{Kossmann2002Shooting}, Branch-and-Bound Skyline~\cite{DBLP:conf/sigmod/PapadiasTFS03} and series of works afterwards have progressively improved the efficiency on general version of skyline computation~\cite{mullesgaard2014efficient,park2017efficient,zhou2016adaptive}.\eat{A static version of skyline~\cite{}, given a set of multi-dimensional data points, skyline computation returns the set of skyline points in the dataset that are not dominated by any others.}

A \textit{dynamic} skyline query is a variation of skyline computation that was first introduced in~\cite{DBLP:conf/sigmod/PapadiasTFS03,DBLP:journals/tods/PapadiasTFS05}. Instead of computing the skyline points purely from the given dataset, dynamic skyline query returns series of points that are not dominated by any others with respect to $q$. In another word, skyline computation can be viewed as a special case of dynamic skyline query where $q$ is fixed as origin point and only the comparison (without distance evaluation) is required.

\vspace{0ex}\subsection{Secure Skyline Query over Encrypted Data}

With the development of cryptography, Encryption technology is gradually applied in the database field.

Bothe \etal\cite{Bothe2014Skyline} presented an approach for skyline computation over Encrypted Data. It provided efficiency analysis and empirical study for computing skyline points and decrypting the results. However, it failed to provide any formal security guarantee. Moreover, as discussed above, skyline computation only requires comparison without distance computation; processing it in encrypted domain can be easily performed through Order Preserving Encryption~\cite{agrawal2004order} or Order Revealing Encryption~\cite{boldyreva2011order,DBLP:conf/asiacrypt/ChatterjeeD15}. Another work~\cite{Chen2016Secure}
proposed three novel schemes that enable efficient verification of skyline query results returned by an unauthentic cloud server. Their work focuses on the verification but not privacy issues, and does not work on ciphertext. It is orthogonal to the scope of this paper.

Liu \etal\cite{Liu2017Secure} proposed the first semantically secure protocol for dynamic skyline query over the cloud platform. The scheme adopts both Paillier cryptosystem \cite{paillier1999public} and Secure Multi-party Computation (SMP) as building blocks. Although it is proved to be semantically secure, the protocol suffers from huge computation cost and strict system model. In fact, as a query framework, the response time is the most important issue for the success of the application, but the performance of \cite{Liu2017Secure} is far from satisfactory in this aspect.

\vspace{-1ex}\subsection{Order-Revealing Encryption}

Order-Preserving Encryption (OPE) scheme~\cite{agrawal2004order}, whose ciphertext preserve the original ordering of the plaintexts, has been extensively applied in range query over encrypted databases.

The ideal security goal for an order-preserving scheme, IND-OCPA~\cite{boldyreva2009order}, is to reveal no additional information about the plaintext values besides their order. Boldyreva \etal\cite{boldyreva2009order} were the first to provide a rigorous solution to the problem. They settled on a weaker security guarantee, which was shown to leak at least half of the plaintext bits ~\cite{boldyreva2011order,DBLP:conf/asiacrypt/ChatterjeeD15}. Popa \etal\cite{popa2013ideal} presented the ideal-secure order-preserving encoding scheme. \cite{kerschbaum2015frequency} showed how to achieve the even stronger notion of frequency-hiding OPE. However, these ideal-secure OPE schemes require rounds of interactions between client and server.

To improve the security of OPE, Boneh \etal\cite{Dan2015Semantically} presented Order-Revealing Encryption schemes (ORE), another method for circumventing the lower bound deduced by Boldyreva \etal \cite{boldyreva2009order}. In an ORE scheme, the numerical order of two ciphertexts does not necessarily reflect that of the original messages as OPE does. Instead, the order of the original messages can only be decided by a carefully designed function over the corresponding ciphertexts. Pandey and Ruselakis \cite{Pandey2012Property} previously considered this type of relaxation in the context of \textit{property-preserving encryption}. In a property-preserving encryption scheme, there is a publicly computable function that can be evaluated on ciphertexts to determine the value of some property on the underlying plaintexts. OPE can thus be viewed as a property-preserving encryption scheme where the computable function is the comparison operation. Pandey and Rouselakis introduced and explored several indistinguishable-based notions of security for property-preserving encryption. However, they did not construct an order-revealing encryption scheme. Chenette \etal\cite{Chenette2016Practical} built efficiently implementable order-revealing encryption based on pseudorandom functions. Lewi \etal\cite{Lewi2016Order} improved the above scheme. The ORE scheme in \cite{Lewi2016Order} is adopted for this work, and it will be discussed further in Section \ref{sec:SecureSkyline}.

\vspace{0ex}\section{Problem Definition}
\label{sec:PROBLEMDEFINITION}
In this section, we shall introduce a group of key concepts for skyline query, and finally describe the system and security models in this paper. For ease of discussion, the key notations used throughout this paper are summarized in Table \ref{tb:notation}.

\begin{table}[t]
	\caption{Summary of notations}\vspace{0ex}
	\begin{center}
		\begin{tabular}{l|p{5.9cm}}
			\toprule
			\textbf{Notation} & \textbf{Definition} \\
			\midrule
			$n$	& The number of tuples in the database \\
			$d$ & The dimension of database \\
			$q$ & The query tuple \\
			$\mathsf{Enc}(q)$ & Ciphertext of the query tuple \\
			$\mathsf{Enc}(2q)$ & Ciphertext of the doubled query tuple \\
			$P=\{p_1,\ldots,p_n\}$ & A database with $n$ tuples \\
			$E(P)$ & Ciphertexts of tuples for $P$ \\
			$E(\Phi)$ & Ciphertexts of the pairwise sums for tuples in $P$ \\
			$p_i[j]$ & The $j-th$ attribute of $p_i$ \\
			$keys[\cdot]$ & The set of private keys \\
			\bottomrule
		\end{tabular}\label{tb:notation}
	\end{center}\vspace{0ex}
\end{table}

\vspace{0ex}\subsection{Skyline Query Definition}
In this part, we shall introduce a series of key concepts for skyline problem that is important for the following discussion.
\begin{definition}
	[Domination] Given two
	points $p_\alpha$, $p_\beta$ in $d$-dimensional space, we say $p_\alpha$
	dominates $p_\beta$ (denoted by $p_\alpha\prec p_\beta$), if $\forall i \in \{1,\ldots,d\}$, $p_\alpha[i] \leq p_\beta[i]$,
	and $\exists i \in \{1,\ldots,d\}$, $p_\alpha[i] < p_\beta[i]$.
\end{definition}\label{def:d}
\begin{definition}
	[Skyline Computation] Given a dataset $P=\{p_1,\ldots,p_n\}$ in $d$- dimensional space,
	skyline computation returns the points set $S\subseteq P$, such that $\forall p\in S$, $\nexists p'\in P$ such that $p'\prec p$ (\ie $\forall p \in S, p' \in P$, $p'$ cannot dominate $p$).
\end{definition}\label{def:sc}
\begin{definition}
	[Dynamic Domination] Given two
	points $p_\alpha$, $p_\beta$ and a query point $q$ in $d$-dimensional space, we say $p_\alpha$
	dynamically dominates $p_\beta$ with respect to $q$ (denoted by $p_\alpha\prec_q p_\beta$), if $\forall i \in \{1,\ldots,d\}$, $|p_\alpha[i]-q[i]| \leq |p_\beta[i]-q[i]|$,
	and $\exists i \in \{1,\ldots,d\}$, $|p_\alpha[i]-q[i]| < |p_\beta[i]-q[i]|$.
\end{definition}\label{def:dd}
\begin{definition}
	[Dynamic Skyline Query] Given a dataset $P=\{p_1,\ldots,p_n\}$ and a query $q$ in $d$-dimensional space, dynamic skyline query returns the set $S\subseteq P$, such that $\forall p\in S$, $\nexists p'\in P$ such that $p'\prec_q p$ (\ie $\forall p \in S, p' \in P$, $p'$ cannot dynamically dominate $p$ with respect to $q$).
\end{definition}\label{def:ddsq}

A common algorithm (\ie BNL~\cite{borzsony2001skyline}) for dynamic skyline query is shown in Algorithm \ref{skylinequery}. It first calculates the differences (\ie $t_i$) between each tuple (\ie $p_i$) and the query request (\ie $q$) in every dimension (Lines 1-3). When a tuple $p_i$ is read from $P$, it is added to $S$ if $S$ is empty (Lines 5-6). Otherwise, we shall compare $p_i$'s corresponding difference tuple with respect to $q$, namely $t_i$, with that of each tuple in $S$. In case $t_i\prec t_j$, where $p_j\in S$, we shall delete $p_j$ from $S$. If there is no $p_j\in S$ such that $t_j\prec t_i$, we shall add $p_i$ to $S$ (Lines 10-11, 16-18). The algorithm repeats this process for the remaining tuples in $P$, and finally returns $S$ (Line 21).

\begin{algorithm}[t]
	\caption{Basic Skyline Query Algorithm}
	\label{skylinequery}
	\begin{algorithmic}[1]
		\REQUIRE The dataset $P$ and a query tuple $q$
		\ENSURE The result set of skyline points $S$
		\FOR{$i$ in $1,\ldots,n$ and $j$ in $1,\ldots,d$}
		\STATE let $t_i[j] = |p_i[j] - q[j]|$
		\ENDFOR
		\FOR{$i$ in $1,\ldots,n$}
		\IF{$S$ is empty}
		\STATE add $p_i$ to $S$
		\ELSE
		\STATE $flag \leftarrow True$
		\FOR{each $p_j\in S$}
		\IF{$t_j\prec t_i$}
		\STATE $flag \leftarrow False$
		\ELSIF{$t_i\prec t_j$}
		\STATE delete $p_j$ from $S$
		\ENDIF
		\ENDFOR
		\IF{$flag == True$}
		\STATE add $p_i$ to $S$
		\ENDIF
		\ENDIF
		\ENDFOR
		\RETURN $S$
	\end{algorithmic}
\end{algorithm}
We shall use this as the basis for our secure skyline model. Notably, this is not the most efficient algorithm for plaintext skyline query. We select this method as our building block for the following reasons. Firstly, the state-of-the-art solution for secure dynamic skyline is~\cite{Liu2017Secure}, it adopts BNL~\cite{borzsony2001skyline} as the basic building block. In line with them and to make a fair comparison, our solution is constructed according to the same query framework. Secondly, BNL is a common and popular iterative algorithm for answering dynamic skyline query in plaintext. Thirdly, as discussed in Section~\ref{sec:intro}, the key challenge in secure dynamic skyline query lies in the solution for performing both subtraction and comparison over ciphertext. A secure model building on any other (plaintext) dynamic skyline query algorithm inevitably has to address that. In other words, although our solution in this work adopts Algorithm \ref{skylinequery} as the foundation, it can be easily adapted to other (plaintext) dynamic skyline query algorithms.


\vspace{0ex}\subsection{System Model and Design Goals}\label{ssec:secmodel}
Our system model involves three types of participants: a data owner, a cloud server and a group of query users. The cloud server is assumed to have large storage and computation ability, and it provides outsourcing storage and computation services. As Fig.~\ref{model} shows, the data owner employs the cloud service and stores his private database in the cloud server. To preserve data privacy, the data owner will encrypt his dataset, and only outsource the encrypted dataset to the cloud. Every query user may submit a query point (\ie $q$) toward the system. The query request may be locally encrypted before sending to the cloud server. Then, the cloud server will perform dynamic skyline query over encrypted database and query request without decryption. Afterwards, it returns the encrypted results to the user. Finally, the user decrypts these results using their own private keys.

\begin{figure}[t]
	\center
	\includegraphics[width=0.75\columnwidth]{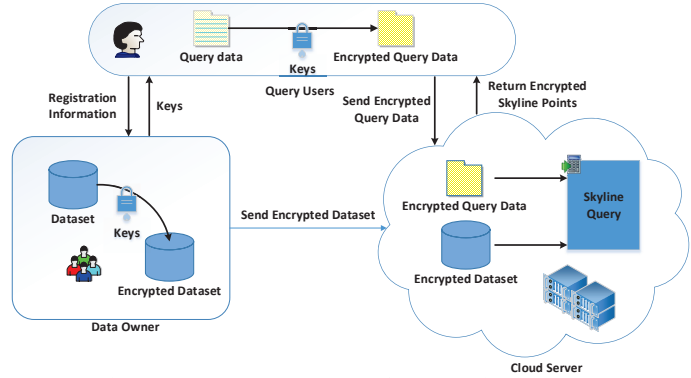}
	\vspace{-3ex}\caption{The system model of secure skyline query}
	\label{model}
	\vspace{0ex}\end{figure}

\vspace{0ex}\subsubsection{Security model}
We parameterize the security model by a collection of leakage functions
\[\mathcal{L} = (\mathcal{L}_{Encrypt}, \mathcal{L}_{Query}, \mathcal{L}_{Insert}, \mathcal{L}_{Delete}).\]
The functions describe what information the protocol leaks to the adversary. The definition ensures that the scheme does not reveal any information beyond what can be inferred from the leakage functions.

We define two games ${\rm Game}_{\mathcal{R},\mathcal{A}}$ and ${\rm Game}_{\mathcal{S},\mathcal{A}}$ as follows. The adversary repeatedly encrypts data and queries skyline points, and receives the transcripts generated from $Encrypt()$ and $Query()$ algorithms in the real game ${\rm Game}_{\mathcal{R},\mathcal{A}}$ or receives the transcripts generated by the simulator $\mathcal{S}(\mathcal{L}_{Encrypt})$ and $\mathcal{S}(\mathcal{L}_{Query})$ in the ideal game ${\rm Game}_{\mathcal{S},\mathcal{A}}$. Eventually, $\mathcal{A}$ outputs a bit 0 (${\rm Game}_{\mathcal{R},\mathcal{A}}$) or 1 (${\rm Game}_{\mathcal{S},\mathcal{A}}$).
\begin{definition}[Adaptively secure]
	A scheme is $\mathcal{L}$-adaptively-secure if for all probabilistic polynomial-time algorithm $\mathcal{A}$, there exists an efficient simulator $\mathcal{S}$ such that the following equation holds:
	\[\big|Pr[{\rm Game}_{\mathcal{R},\mathcal{A}}(\lambda)=1] - Pr[{\rm Game}_{\mathcal{S},\mathcal{A}}(\lambda)=1]\big| \leq negl(\lambda).\]
	\label{df:AdaptivelySecure}
\end{definition}\vspace{0ex}
\vspace{0ex}\subsubsection{Design goals}
Our design goals contain both efficiency and privacy, including database privacy, query privacy, and result privacy. The details are as follows.
\begin{itemize} 
	\item Data owners need to encrypt the database before it is sent to the cloud server. Meanwhile, the content in the database is not leaked to the cloud server.
	\item Query request, as well as the results, should not be revealed to the cloud server throughout query processing.
	\item As a query processing framework, efficiency should be considered as one of the most important issue for measuring its success. Although the entire query processing is performed in ciphertext here, it should minimize the additional cost associated with it.
\end{itemize}

\vspace{0ex}\section{The SCALE Framework}
\label{sec:SecureSkyline}
In this section, we shall introduce the \textsc{scale} framework for secure dynamic skyline querying under the proposed system model. As discussed above, processing dynamic skyline query given a query point $q$ requires performing both subtraction and comparison. Addressing both tasks in ciphertext form is challenging as there is no practical encryption scheme that supports both operations simultaneously.

To address this challenge, we reinvestigate the entire dynamic skyline query workflow described in Definition~\ref{def:ddsq} and Algorithm~\ref{skylinequery}. Our investigation revealed an important fact that may lead to an effective solution. Notably, to answer a dynamic skyline query given a request $q$, quantifying the differences between each point $p_i$ and $q$ through all dimensions is not mandatory. Instead, what we need is the relative order of such differences for a group of different $p_i$. 

\begin{observation}\label{obs1}
	In order to evaluate whether $p_\alpha$ dynamically dominates $p_\beta$ with respect to $q$, we do not need to know the exact values for the difference vectors $T_\alpha$ and $T_\beta$, where $T_i[j]=|p_i[j]-q[j]|$ for $j\in[1,\ldots,d]$. In fact, what we really need to know is whether $T_\alpha[j] \le T_\beta[j]$ or $T_\alpha[j] < T_\beta[j]$ for $j\in[1,\ldots,d]$. For simplicity, for an arbitrary dimension $j$, we need to know whether $p_\alpha[j]$ or $p_\beta[j]$ is close to $q[j]$. To answer that, we have to consider two possible cases depending on whether $q[j]$ falls in the interval between $p_\alpha[j]$ and $p_\beta[j]$. Fig.~\ref{compare1} and Fig.~\ref{compare2} depict the cases. In the case of Fig.~\ref{compare1}, the order between $T_\alpha[j]$ and $T_\beta[j]$ can be interpreted as the relationship between $p_\alpha[j]$ and $p_\beta[j]$. In the case of Fig.~\ref{compare2}, the order between $T_\alpha[j]$ and $T_\beta[j]$ can be interpreted as the relationship between $p_\alpha[j]+p_\beta[j]$ and $q[j]+q[j]$.
\end{observation}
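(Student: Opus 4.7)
The plan is to verify Observation~\ref{obs1} through a direct dimension-by-dimension case analysis, matching the two figures referenced in the statement. Since dynamic domination (Definition~\ref{def:dd}) is a conjunction of per-dimension comparisons between $T_\alpha[j]$ and $T_\beta[j]$, it suffices to prove that, for every single dimension $j$, the order between $T_\alpha[j]$ and $T_\beta[j]$ is determined by a plaintext comparison that never requires computing the absolute value $|p_i[j]-q[j]|$ explicitly. The information needed to decide which plaintext comparison to invoke is exactly the sign pattern of $p_\alpha[j]-q[j]$ and $p_\beta[j]-q[j]$, \ie whether $q[j]$ lies inside or outside the interval spanned by $p_\alpha[j]$ and $p_\beta[j]$.

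First I would fix an arbitrary dimension $j$ and dispose of the ``outside'' configuration depicted in Fig.~\ref{compare1}: both points lie on the same side of $q[j]$, so the absolute values can be stripped without crossing zero. A short check shows that if $p_\alpha[j],p_\beta[j]\ge q[j]$ then $T_\alpha[j]\le T_\beta[j]\Longleftrightarrow p_\alpha[j]\le p_\beta[j]$, while if $p_\alpha[j],p_\beta[j]\le q[j]$ the direction flips to $p_\alpha[j]\ge p_\beta[j]$. In either subcase the decision reduces to a plaintext comparison between $p_\alpha[j]$ and $p_\beta[j]$ alone, as claimed.

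Next I would treat the ``straddling'' configuration of Fig.~\ref{compare2}, where $q[j]$ lies between the two values. Without loss of generality assume $p_\alpha[j]\le q[j]\le p_\beta[j]$. Expanding the absolute values yields $T_\alpha[j]=q[j]-p_\alpha[j]$ and $T_\beta[j]=p_\beta[j]-q[j]$, and a one-line rearrangement gives
\[
T_\alpha[j]\le T_\beta[j]\;\Longleftrightarrow\; p_\alpha[j]+p_\beta[j]\ge 2q[j],
\]
which is exactly the comparison between $p_\alpha[j]+p_\beta[j]$ and $q[j]+q[j]$ named in the observation. The symmetric subcase with $\alpha$ and $\beta$ swapped only flips the inequality direction, so the same pair of quantities still suffices. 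Aggregating the per-dimension conclusions according to Definition~\ref{def:dd} then establishes the observation.

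The main obstacle I anticipate is not conceptual but bookkeeping: one must carefully track strict versus non-strict inequalities and the boundary equalities (for instance $p_\alpha[j]=q[j]$, which sits on the boundary between the two cases) so that the ``strictly better in at least one dimension'' clause of dynamic domination is preserved after the transformation. Handling this cleanly is also what dictates the precise ciphertexts that the encryption scheme in the sequel will need to precompute, namely the attribute values, the pairwise sums listed as $E(\Phi)$ in Table~\ref{tb:notation}, and the doubled query $\mathsf{Enc}(2q)$ --- which is exactly why those objects appear in the notation table before any algorithm is presented.
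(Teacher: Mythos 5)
Your proposal is correct and follows essentially the same route as the paper, which justifies Observation~\ref{obs1} only informally via the two-case split of Fig.~\ref{compare1}/\ref{compare2} and operationalizes it in the \textit{SecureCompare} procedure (Algorithm~\ref{securecompare}); your explicit algebra for the outside case (same-side sign stripping, with the direction flip when both values lie below $q[j]$) and the straddling case ($T_\alpha[j]\le T_\beta[j]\Leftrightarrow p_\alpha[j]+p_\beta[j]\ge 2q[j]$) merely fills in the computation the paper leaves to the reader. Your attention to strict versus non-strict inequalities and boundary ties is a welcome refinement that the paper glosses over but that its algorithm implicitly handles via the three-valued $\{-1,0,1\}$ comparison output.
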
 

In the aforementioned study, we notice that the multi-type-operation requirement (\ie with both subtraction and comparison) in dynamic skyline query can be transformed to \textbf{uni-type-operation involving only comparison}. Inspired by this critical point, current encryption schemes that support comparison over ciphertext can be adopted in our framework to realize our design goals.

\vspace{0ex}\subsection{Database Encryption}\label{ssec:41}
In our scheme, we adopt a state-of-the-art encryption scheme that supports comparison, namely \textit{order-revealing encryption} \cite{Lewi2016Order}. We first present the formal definition of order-revealing encryption.
\begin{definition}
	[Order-Revealing Encryption] An order-revealing encryption (ORE) scheme \cite{Lewi2016Order}
	is a tuple of three algorithms including $\mathsf{Setup}$, $\mathsf{Encrypt}$ and $\mathsf{Compare}$ defined over a
	well-ordered domain $D$ with the following properties:
	\begin{itemize} 
		\item $\mathsf{Setup}(1^\lambda)\rightarrow sk$: On input a security parameter $\lambda$, the setup
		algorithm outputs a secret key $sk$.
		\item $\mathsf{Encrypt}(sk, m)\rightarrow ct$: On input a secret key $sk$ and a message
		$m \in D$, the encryption algorithm outputs a ciphertext $ct$.
		\item $\mathsf{Compare}(ct_1, ct_2)\rightarrow b$: On input two ciphertexts $ct_1$, $ct_2$,
		the compare algorithm outputs a bit $b \in \{-1, 0, 1\}$.
	\end{itemize}
\end{definition}\label{dfore}

%
\begin{figure}[t]
	\centering
	\subfloat[Case 1.]{
		\label{compare1}
		\includegraphics[height=2cm]{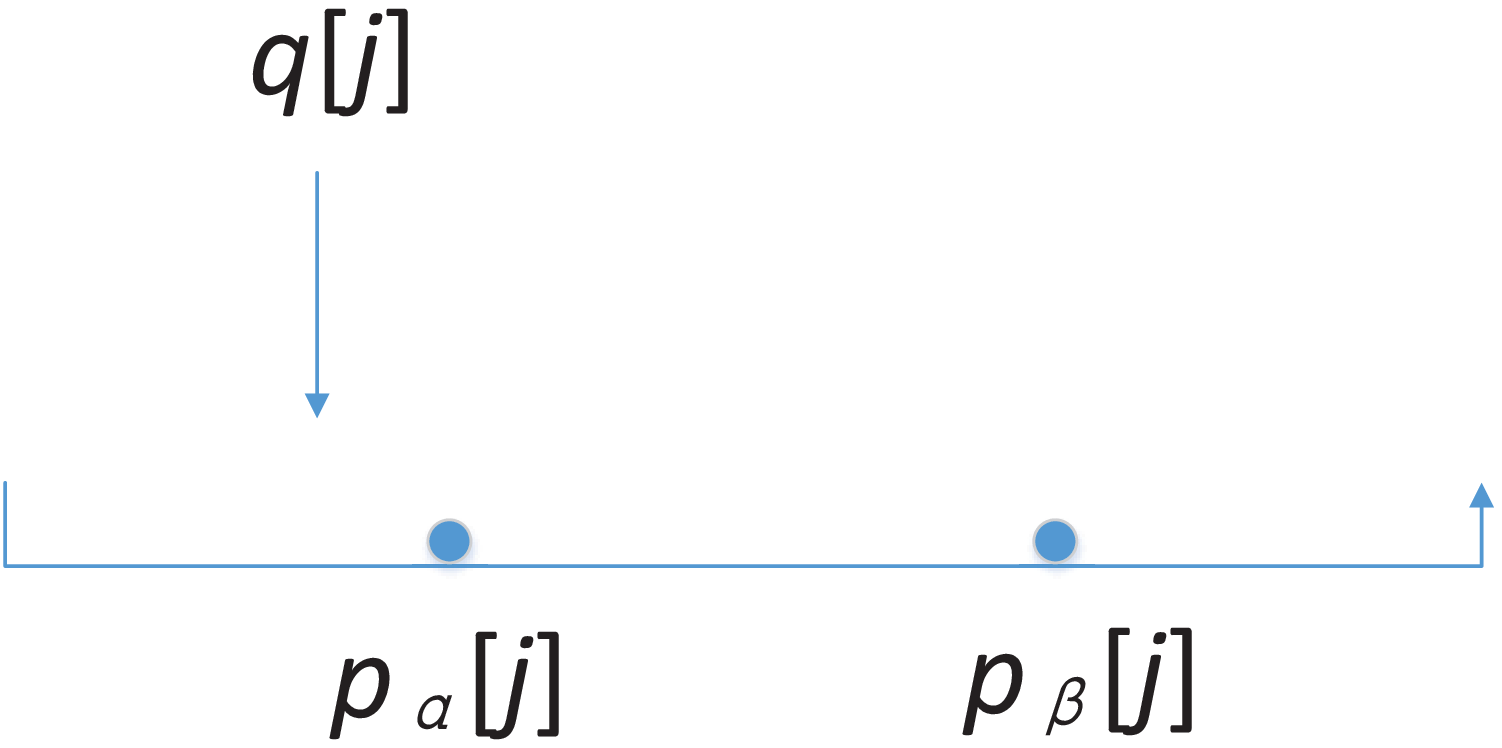}}\quad
	\subfloat[Case 2.]{
		\label{compare2}
		\includegraphics[height=2cm]{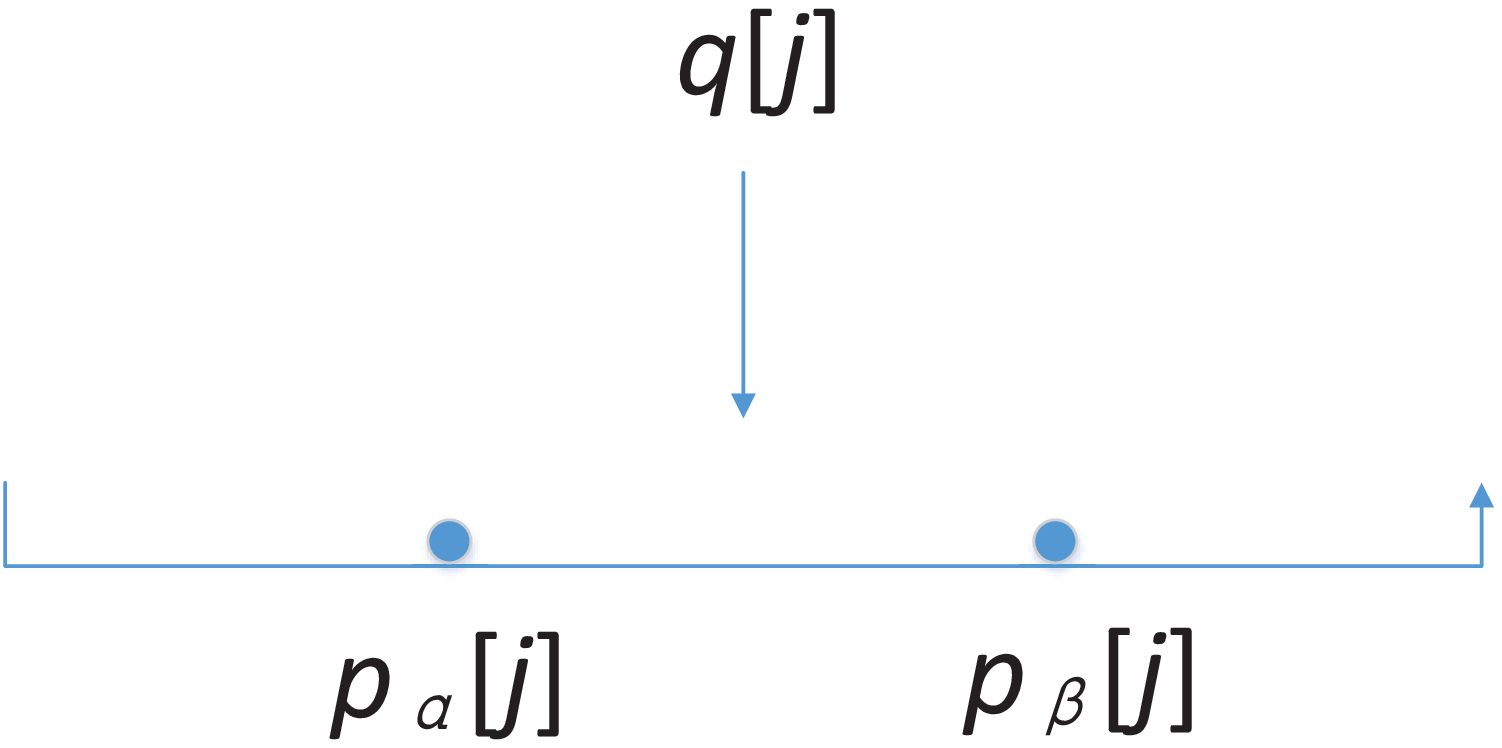}}
	\vspace{0ex}\caption{Cases for the relationship between $q$ and $(p_\alpha,p_\beta)$}
	\label{fig:securecompare}
	\vspace{0ex}\end{figure}

With the help of ORE scheme, evaluating the dynamic domination relation between $p_\alpha$ and $p_\beta$ can be carried out securely in ciphertext form as outlined in Algorithm \ref{securecompare}. For ease of subsequent discussion, we shall denote $\mathsf{Enc}(x)$ as the ORE ciphertext for the original message $x$.

\begin{algorithm}[t]
	\caption{SecureCompare Algorithm}
	\label{securecompare}
	\begin{algorithmic}[1]
		\REQUIRE The ORE ciphertext for $\mathsf{Enc}(p_\alpha[j])$, $\mathsf{Enc}(p_\beta[j])$, $\mathsf{Enc}(q[j])$, as well as $\mathsf{Enc}(p_\alpha[j]+p_\beta[j])$, $\mathsf{Enc}(2q[j])$.
		\ENSURE The comparison result as $-1,0,1$ denoting that $p_\alpha[j]$ is closer to (\resp equivalent with, farther from) $q[j]$ than $p_\beta[j]$.
		
		\IF{$\mathsf{ORE.Compare}(\mathsf{Enc}(p_\alpha[j]),\mathsf{Enc}(p_\beta[j]))==0$}
		\RETURN $0$
		\ELSIF{$\mathsf{ORE.Compare}(\mathsf{Enc}(p_\alpha[j]),\mathsf{Enc}(p_\beta[j]))==-1$}
		\IF{$\mathsf{Enc}(q[j])$ falls outside the interval}
		\RETURN $\mathsf{ORE.Compare}(\mathsf{Enc}(q[j]),\mathsf{Enc}(p_\alpha[j]))$
		\ELSE
		\RETURN $\mathsf{ORE.Compare}(\mathsf{Enc}(2q[j]),\mathsf{Enc}(p_\alpha[j]+p_\beta[j]))$
		\ENDIF
		\ELSE
		\IF{$Enc(q[j])$ falls outside the interval}
		\RETURN $\mathsf{ORE.Compare}(\mathsf{Enc}(q[j]),\mathsf{Enc}(p_\beta[j]))$
		\ELSE
		\RETURN $\mathsf{ORE.Compare}(\mathsf{Enc}(p_\alpha[j]+p_\beta[j]),\mathsf{Enc}(2q[j]))$
		\ENDIF
		\ENDIF
	\end{algorithmic}
\end{algorithm}

\subsubsection{Minimizing the number of keys} 
Following Observation~\ref{obs1}, a data owner needs to encrypt database $P$ and the sum of any two tuples in $P$ in each dimension, namely $p_\alpha[j]+p_\beta[j]$, where $\alpha \neq \beta, \alpha,\beta\in [1,n], j \in [1,d]$. The above two ciphertexts are denoted as $E(P)$ and $E(\Phi)$, respectively. In this step, if we use the same private key on both $E(P)$ and $E(\Phi)$, the sum of paired tuples in $E(\Phi)$, although encrypted, will leak more message about plaintext beyond the order. 

For example, assume that $P$ contains five tuples, whose values in a particular dimension are $a, b, c, d, e$, respectively. Suppose that after sorting the values in ascending order, we get $b, c, a, e, d$. Then their sums can be listed as $b+c, b+a, b+e, b+d, c+a, c+e, c+d, a+e, a+d, e+d$. For ease of discussion, in the following we shall refer to these values as \textit{pairs of sums}. If we encrypt the results for these pairs of sums using the same key as $E(P)$, an attacker can get the ordering of plaintexts. Therefore, he may possibly know $b+e \leq c+a$, and then infer that $e-a \leq c-b$. In this way, besides the order, the distribution of values in plaintext tuples is also leaked. 

However, according to the security model in this work, except the order of tuples in some dimensions, the cloud should not be able to infer the content of the tuples. Therefore, we have to avoid leaking the distribution of
data by adopting different keys in ORE. Intuitively, an ideal method is to encrypt each pair of sums using a different key, as it is not required to perform comparison among any pair of $p_\alpha[j],p_\beta[j]$ according to Algorithm~\ref{securecompare}. However, the increased number of keys will further introduce key management and storage problems. We propose a novel method to address this problem, which is shown in Fig.~\ref{add}.

\begin{figure*}[t]
	\centerline{\includegraphics[width=0.75\textwidth]{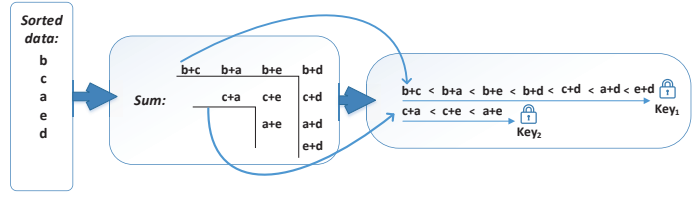}}
	\vspace{0ex}\caption{A novel encryption scheme for pairs of tuples}
	\label{add}
	\vspace{-1ex}\end{figure*}

As shown in Fig.~\ref{add}, $b, c, a, e, d$ are the sorted values for five tuples in $P$ on a particular dimension. According to Algorithm~\ref{securecompare}, these values should be encrypted using the same key as comparisons over their ciphertext are required. As a result, given that $\mathsf{Enc}(b),\ldots,\mathsf{Enc}(d)$ are encrypted using the same key under ORE, any adversary can easily infer that $b+c < b+a < b+e < b+d$ regardless that $b+c,\ldots,b+d$ are encrypted with different keys or not. Therefore, it is not beneficial to use multiple keys for such a group of sums. 
\begin{definition}
	[Order-Obvious Class] Given the order of a set of $n$ elements, whose exact values are unknown, if the order of two summations over paired elements can be inferred, we call them \textit{Order-Obvious}. All the $n(n-1)/2$ paired summations can be divided into several disjoint subsets accordingly, such that all the summations in each subset are \textit{Order-Obvious}. We refer to each subset as an \textit{Order-Obvious Class} (abbrev. \textit{OOC}).
\end{definition}\label{dfooc}

Generally, we can find all \textit{OOC}s, which is classified using lines in Fig.~\ref{add}. The relations for sums in the same \textit{OOC} (\eg line) can be inferred easily purely from $E(P)$. In light of that, we can use the same key to encrypt the sums in the same \textit{OOC}, and adopt different keys across \textit{OOC}s. In this way, any adversary cannot get additional information over the ciphertexts besides the order, and we can effectively minimize the number of keys. In particular, the minimum number of keys, denoted as $\kappa$, (\eg the number of lines in Fig.~\ref{add}) have to satisfy the following theorem.


\begin{theorem}\label{thm:keynum}
	In order to satisfy the predefined security model, the minimum number of encryption keys in a dimension should be $\kappa=\lceil\frac{2*n-3}{4}\rceil$.
\end{theorem}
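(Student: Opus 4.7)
My plan is to reinterpret the question as a minimum chain partition on a finite poset and then invoke Dilworth's theorem. Let $x_1 < x_2 < \cdots < x_n$ be the sorted plaintext values in the fixed dimension and identify each pair-sum $x_i + x_j$ (with $i<j$) with the index pair $(i,j)$. Equip the set of pairs with the componentwise order $(i,j) \preceq (k,l) \Leftrightarrow i \le k \text{ and } j \le l$. The first step is to show that two pair-sums are Order-Obvious in the sense of Definition~\ref{dfooc} if and only if the corresponding pairs are comparable under $\preceq$. The ``if'' direction is immediate by termwise addition. For the ``only if'' direction, given incomparable pairs with, say, $i<k$ and $j>l$, I would exhibit two valuations of $x_1<\cdots<x_n$ consistent with the global order that realize both possible orders of $x_i+x_j$ versus $x_k+x_l$, obtained by appropriately spreading the values around the two disputed coordinates; this certifies that no relation can be inferred from the ordering of $E(P)$ alone. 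Consequently each OOC is a chain, and $\kappa$ is exactly the minimum number of chains needed to cover the pair-poset.

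By Dilworth's theorem, this minimum chain cover equals the maximum antichain size, so the task reduces to computing the latter. Any antichain can be written $(i_1,j_1),\ldots,(i_t,j_t)$ with $i_1<\cdots<i_t$, in which case incomparability forces $j_1>\cdots>j_t$. Combining the strict monotonicity with the constraints $i_a\ge 1$, $j_a\le n$, and $i_a<j_a$ gives $t \le i_t < j_t \le n-t+1$, so $t\le\lfloor n/2\rfloor$. Matching tightness is witnessed by the explicit antichain $\{(k,\,n+1-k):k=1,\ldots,\lfloor n/2\rfloor\}$, which lies in the pair-poset since $k<n+1-k$ throughout and is visibly incomparable componentwise. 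Hence $\kappa=\lfloor n/2\rfloor$.

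It remains to verify the numerical identity $\lfloor n/2\rfloor=\lceil(2n-3)/4\rceil$, which I would handle by a routine case split on $n\bmod 4$: in each of the four residues one checks that $(2n-3)/4$ lies strictly between $\lfloor n/2\rfloor - 1$ and $\lfloor n/2\rfloor$. The main obstacle, in my view, is the ``only if'' half of the Order-Obvious characterization: one must be careful about what ``inferable purely from $E(P)$'' actually means, and then produce two concrete valuations witnessing both possible sum-orders for any incomparable pair. Once that semantic reduction to componentwise comparability is in place, the remainder is a textbook Dilworth argument plus a small extremal count.
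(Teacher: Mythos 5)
Your proof is correct, but it takes a genuinely different route from the paper's. The paper argues constructively: it arranges the $n(n-1)/2$ pair sums in the upper-triangular part of a matrix whose rows and columns follow the sorted order, observes that each successive ``border'' of that triangle is a single order-obvious class of size $2n-3, 2n-7, 2n-11,\ldots$, and counts the borders to obtain $\kappa=\lceil\frac{2n-3}{4}\rceil$; minimality is then argued by a pigeonhole remark that assigning one key to two of these classes would leak distributional information. You instead formalize the problem as a minimum chain partition of the pair poset $\{(i,j): i<j\}$ under componentwise order and invoke Dilworth's theorem, exhibiting the maximum antichain $\{(k,n+1-k)\}$ of size $\lfloor n/2\rfloor$ and checking the identity $\lfloor n/2\rfloor=\lceil\frac{2n-3}{4}\rceil$ (which does hold for all $n\ge 2$). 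Your route buys a genuinely tight lower bound: the paper's argument only shows that its own partition cannot be coarsened, whereas your antichain shows that \emph{no} partition into order-obvious classes whatsoever can use fewer than $\lfloor n/2\rfloor$ keys, which is really what ``minimum'' requires. What the paper's route buys is an explicit minimum chain decomposition --- its $t$-th border is exactly the chain $(t,t+1)\preceq\cdots\preceq(t,n-t+1)\preceq\cdots\preceq(n-t,n-t+1)$ of length $2n-4t+1$ --- and this explicit partition is what the encryption algorithm and the per-class AVL-trees actually implement, so it cannot be replaced by a purely existential Dilworth statement in the rest of the construction. The one step you leave as a sketch, that incomparable pairs are never order-obvious, is genuinely needed for your reduction but is routine: incomparability forces four distinct indices $i<k<l<j$, and the increasing valuations $(0,1,2,M)$ and $(0,M,M+1,M+2)$ (with the remaining $n-4$ values interpolated arbitrarily) realize both orders of $x_i+x_j$ against $x_k+x_l$, so no relation is inferable from the order of $E(P)$ alone.
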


\begin{proof}
	See in Appendix~\ref{apxa}.
\end{proof}

\noindent\textit{Remark.} Through the above strategy, we have minimized the required number of encryption keys. In spite of that, $\kappa$ is still linear to the increase of $n$, which may introduce key management burden if $n$ is very large. To address this, we suggest the following implementations. For each row in Fig.~\ref{add}, we assign it a random $Id_i$. The data owner only needs to store one master key $mk$ and a series of random $Id_i$. Then, $key_i$ for encrypting each row is generated by $mk \oplus Id_i$. In this way, we can effectively generate $\kappa$ different keys based on $mk$.

\subsubsection{Accessing the pairs of sums}
\label{section:AccessingPair}
As required by Algorithm~\ref{securecompare}, in order to compare $t_\alpha[j]$ and $t_\beta[j]$, it is always required to retrieve the ciphertext of $p_\alpha[j]+p_\beta[j]$. Therefore, it is necessary to build a map between the elements of $E(P)$ with the corresponding sums in $E(\Phi)$. That is, we need to build an index that maps $\mathsf{Enc}(p_\alpha[j])$ and $\mathsf{Enc}(p_\beta[j])$ to $\mathsf{Enc}(p_\alpha[j]+p_\beta[j])$. To this end, we present an index based on hash function. Formally, we define a hash function as $h:\mathbb{N}^2\rightarrow\mathbb{N}$, where $\mathbb{N}$ denote the set of natural numbers. The hash function $h$ should satisfy the following property, $\forall x_1,y_1,x_2,y_2\in\mathbb{N}$, $h(x_1,y_1)=h(x_2,y_2)$ if and only if $x_1=x_2$ and $y_1=y_2$.

Assume the indices for $\mathsf{Enc}(p_\alpha[j])$ and $\mathsf{Enc}(p_\beta[j])$ in $E(P)$ are denoted as $\alpha$ and $\beta$, respectively. Then the index of $\mathsf{Enc}(p_\alpha[j]+p_\beta[j])$ in $E(\Phi)$ can be easily acquired as $h(\alpha,\beta)$.
Fig.~\ref{Indexes} presents an example for the hash function. There are five encrypted values in $E(P)$, namely $a,\ldots,e$. The hash function in this example is simply designed as a regular traversal order for the corresponding sums. In fact, any hash function that satisfies the aforementioned property can be adopted here.

\subsubsection{Indexing the pairs of sums}
Additionally, as all the pairs of sums within a particular \textit{OOC} are encrypted by ORE using the same key, we present to use additional index structures for efficient retrieval of corresponding entries for these \textit{pairs of sums}. Therefore, we also design an index scheme for management of these ORE encrypted \textit{pairs of sums}. In \textsc{scale}, we adopt AVL-Tree based structure to construct the indexing structure, as it provides the best efficiency when querying for a particular range. Specifically, it is possible for us to build an AVL-Tree to index all these encrypted sums in the same \textit{OOC}. Notably, each AVL-Tree roots at the median of each \textit{OOC} and all the nodes in an AVL-Tree are the corresponding ciphertexts for \textit{pairs of sums} in the same \textit{OOC}. 

For instance, given the records in Fig.~\ref{add}, there are two \textit{OOC}s. We shall build two different AVL-Trees for indexing the corresponding ciphertexts for each \textit{OOC}, respectively. That is, the first \textit{OOC} centered at $b+d$ corresponds to an AVL-Tree rooted with $\mathsf{Enc}(b+d)$; another \textit{OOC} centered at $c+e$ corresponds to another AVL-Tree rooted with $\mathsf{Enc}(c+e)$ (as shown in Fig.~\ref{Indexes}). 

In fact, data structures other than AVL-Tree can also be adopted to index the ORE ciphertexts for each \textit{OOC}. We select AVL-Tree as the default setting in \textsc{scale} as it provides the best query response time among all the choices. Detailed comparison between AVL-Tree and other choices will be discussed in Section~\ref{ssec:43}.

\begin{figure}[t]
	\centerline{\includegraphics[width=0.8\linewidth,height=3.6cm]{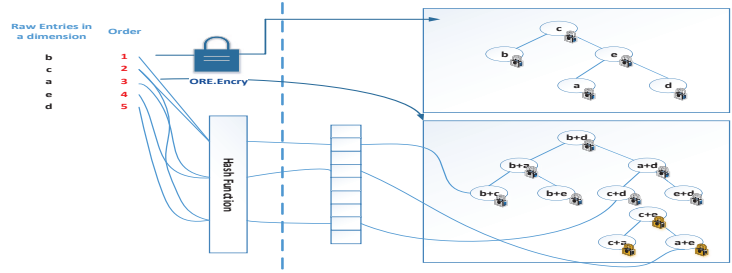}}
	\vspace{-2ex}\caption{The complete ciphertext storage structure}
	\label{Indexes}
	\vspace{0ex}\end{figure}

\subsubsection{Database encryption}\label{sssec:dbenc}
We have now all the ammunitions in place to demonstrate the entire process of encrypting the database (Algorithm \ref{encryptdatasets}). First, the data owner generates $d+\lceil\frac{2*n-3}{4}\rceil*d$ keys (Line 1), and for each column (\ie attribute) in $P$ we encrypt the entries using the same key (Lines 3-5), resulting in $E(P)$. Then, the data owner sorts the entries (Line 8) in each column (\ie attribute) and computes the sums for pairs of entries in each dimension. Afterwards, the sums are then encrypted using the corresponding keys as shown in Fig.~\ref{add} (Lines 9-17), resulting in $E(\Phi)$. Finally, the data owner sends $E(P)$, $E(\Phi)$ to the cloud server.

\begin{algorithm}[t]
	\caption{Dataset Encryption}
	\label{encryptdatasets}
	\begin{algorithmic}[1]
		\REQUIRE The dataset $P$
		\ENSURE The ciphertexts sets $E(P)$, $E(\Phi)$
		\STATE generate $d+\lceil\frac{2*n-3}{4}\rceil*d$ keys with $\mathsf{ORE.Setup}$ as $keys[]$
		\FOR{$p\in P$ and $j$ in $1,\ldots,d$}
		\STATE $\mathsf{Enc}(p[j])\leftarrow\mathsf{ORE.Encrypt}(keys[j],p[j])$
		\STATE let $\mathsf{Enc}(p)=\{\mathsf{Enc}(p[1]),\ldots,\mathsf{Enc}(p[d])\}$ and add $\mathsf{Enc}(p)$ to $E(P)$
		\ENDFOR
		\STATE let $m=1$
		\FOR{$j$ in $1,\ldots,d$}
		\STATE $\Lambda=(p^{(1)}[j],\ldots,p^{(n)}[j])\leftarrow$ sort $p_1[j],\ldots,p_n[j]$ in ascending order
		\WHILE{$\Lambda$ is not empty}
		\FOR{$i$ in $2,\ldots,len(\Lambda)$}
		\STATE add $\mathsf{ORE.Encrypt}(keys[d+m],p^{(1)}[j]+p^{(i)}[j])$ to $E(\Phi)$
		\ENDFOR
		\FOR{$i$ in $2,\ldots,len(\Lambda)-1$}
		\STATE add $\mathsf{ORE.Encrypt}(keys[d+m],p^{(n)}[j]+p^{(i)}[j])$ to $E(\Phi)$
		\ENDFOR
		\STATE remove the first and last elements in $\Lambda$, let $m=m+1$
		\ENDWHILE
		\ENDFOR
		\RETURN $E(P)$, $E(\Phi)$
	\end{algorithmic}
\end{algorithm}

Besides, the data owner also creates a hash function $h$ that maps each pair of elements in $E(P)$ and the corresponding sums in $E(\Phi)$, and sends $h$ to the cloud server. It is now possible for the cloud to quickly locate the ciphertext of the corresponding sums for each pair $(p_\alpha[j], p_\beta[j])$.

\vspace{-1ex}\subsection{Query Processing}

A data user needs to register their information to the Data owner and securely get the keys. Then the query user encrypts the request according to Algorithm \ref{encryptquery} before sending it to the cloud server.

\begin{algorithm}[t]
	\caption{Query Request Encryption}
	\label{encryptquery}
	\begin{algorithmic}[1]
		\REQUIRE The query data $q$, keys from data owner $keys[]$
		\ENSURE The ciphertexts $\mathsf{Enc}(q)$, $\mathsf{Enc}(2q)$
		\FOR{$j$ in $1,\ldots,d$ }
		\STATE $\mathsf{Enc}(q[j])\leftarrow \mathsf{ORE.Encrypt}(keys[j],q[j])$
		\FOR{$m$ in $1,\ldots,\lceil\frac{2*n-3}{4}\rceil$}
		\STATE let $key\_num=d+(j-1)*\lceil\frac{2*n-3}{4}\rceil+m$
		\STATE $\mathsf{Enc}(2q_m[j])\leftarrow\mathsf{ORE.Encrypt}(keys[key\_num],2q[j])$
		\ENDFOR
		\ENDFOR
		\RETURN $\mathsf{Enc}(q)$, $\mathsf{Enc}(2q)$
	\end{algorithmic}
\end{algorithm}

As shown in Algorithm \ref{encryptquery}, user encrypts each dimension of the query tuple using corresponding keys (Line 2) and encrypts the doubled entries for the query tuple using other keys (Lines 4-5). Finally, the user sends $\mathsf{Enc}(q)$, $\mathsf{Enc}(2q)$ to the cloud server. As mentioned in Algorithm \ref{skylinequery}, given an encrypted query $q$ as shown Algorithm~\ref{encryptquery}, the cloud server needs to perform comparisons and computations over encrypted data. According to the approach shown in Fig.~\ref{fig:securecompare}, the cloud server can perform skyline query via the comparison relationship with encrypted tuples, encrypted query request, encrypted sums, and encrypted doubled request. As a result, the process described in Algorithm \ref{skylinequery} can be now performed in ciphertext without decryption, which is shown in Algorithm \ref{secureskylinequery}.

\begin{algorithm}[t]
	\caption{Secure Skyline Query Algorithm}
	\label{secureskylinequery}
	\begin{algorithmic}[1]
		\REQUIRE The ciphertext for dataset $E(P)$, query request $\mathsf{Enc}(q)$, sums for tuples $E(\Phi)$ and doubled query request $\mathsf{Enc}(2q)$
		\ENSURE The encrypted result set of skyline points $S$
		\FOR{$i$ in $1,\ldots,n$}
		\IF{$S$ is empty}
		\STATE add $\mathsf{Enc}(p_i)$ to $S$
		\ELSE
		\STATE $flag\_cur \leftarrow True$
		\FOR{each $\mathsf{Enc}(p_j)\in S$}
		\FOR{$m$ in $1,\ldots,d$}
		\STATE $flag[m]\leftarrow$\textit{SecureCompare}($\mathsf{Enc}(p_i[m])$, $\mathsf{Enc}(p_j[m])$, $\mathsf{Enc}(q[m])$, $\mathsf{Enc}(p_i[m]+p_j[m])$, $\mathsf{Enc}(2q[m])$)
		\ENDFOR
		\IF{$\forall m, flag[m]\ge 0$, and $\exists k$ such that $flag[k]>0$}
		\STATE $flag\_cur\leftarrow False$
		\ELSIF{$\forall m, flag[m]\le 0$, and $\exists k$ such that $flag[k]<0$}
		\STATE delete $\mathsf{Enc}(p_j)$ from $S$
		\ENDIF
		\ENDFOR
		\IF{$flag\_cur$ is $True$}
		\STATE add $\mathsf{Enc}(p_i)$ to $S$
		\ENDIF
		\ENDIF
		\ENDFOR
		\RETURN $S$
	\end{algorithmic}
\end{algorithm}

To illustrate the entire protocol, we provide a running example in the following. 
\begin{example}
	For the convenience of representation, we assume that $P$ contains five tuples, whose entries in dimension $1$ are sorted as $7, 13, 21, 32, 53$. 
	
	According to Algorithm~\ref{encryptdatasets}, we shall first compute the sums for all pairs of values, \eg $7+13=20$, $7+21=28$, $7+32=39$, $7+53=60$, $13+21=34$, $13+32=45$, $13+53=66$, $21+32=53$, $21+53=74$, $32+53=85$. As shown in Theorem~\ref{thm:keynum}, the number of encryption keys required for these sums can be calculated as $\lceil\frac{2*5-3}{4}\rceil=2$. Therefore, we use two keys to encrypt the above sums, resulting in $\mathsf{Enc}_1(20)$, $\mathsf{Enc}_1(28)$, \ldots, $\mathsf{Enc}_1(85)$ and $\mathsf{Enc}_2(34)$,$\mathsf{Enc}_2(45)$,$\mathsf{Enc}_2(53)$. 
	
	Besides, we also need to use another key to encrypt the original tuples, \eg $\mathsf{Enc}_3(7), \\\mathsf{Enc}_3(13), \ldots, \mathsf{Enc}_3(53)$. Suppose that a user submits a query with $q[1]=23$. Then $q$ and $2q$ need to be encrypted according to our scheme, resulting in $\mathsf{Enc}_1(46), \mathsf{Enc}_2(46), \linebreak\mathsf{Enc}_3(23)$. These ciphertexts are then sent to the cloud server. 
	The cloud server compares ciphertexts one by one according to the protocol. Through $\mathsf{ORE.Compare}$ and Algorithm~\ref{securecompare}, the cloud server can easily determine that $\mathsf{Enc}_3(32)$ dominates $\mathsf{Enc}_3(53)$ following the case shown in Fig.~\ref{compare1}. Similarly, $\mathsf{Enc}_3(21)$ dominates $\mathsf{Enc}_3(7)$ and $\mathsf{Enc}_3(13)$. In the case shown in Fig.~\ref{compare2}, $\mathsf{Enc}_3(21)$ dominates $\mathsf{Enc}_3(32)$ because $\mathsf{ORE.Compare}(\mathsf{Enc}_2(\\53),\mathsf{Enc}_2(46))=1$. Algorithm \ref{secureskylinequery} will iteratively repeat this process for all dimensions and remaining tuples.\EndOfProof
\end{example}

\vspace{-2ex}\subsection{Maintenance Issues}\label{ssec:43}
Modifications over database records (\textit{insert}, \textit{delete}, \textit{update}) are fundamental requirements in database applications. In light of that, hereby we discuss the strategies to support these operations in our framework. As depicted in Fig.~\ref{add}, the cloud server stores these encrypted sums of data values in different \textit{OOC}s, which contributes the most expensive maintenance cost. Hence, the way how these encrypted sums are stored is fundamentally important. In fact, many index structures can be used to accomplish this task. In \textsc{scale}, we adopt AVL-Tree \cite{Adel1962An}, as it presents the best efficiency in searching for an entry due to the strictly balanced structure. In fact, we have considered and compared several different structures including Linked list, AVL-Tree and Red-black tree. Table \ref{tb:advAndDisadv} shows the functional comparison over the advantages and disadvantages of these data structures in our models. In the following, we shall sequentially describe how insertion and deletion is supported in \textsc{scale} using AVL-Tree.

\begin{table}[t]
	\caption{Functional comparison over different structures for insertion and deletion}\vspace{-1ex}
	\small
	\begin{center}
		\begin{tabular}{|p{1.6cm}|p{2.5cm}|p{3.4cm}|}
			\hline
			\textbf{Data Structure} & \textbf{Advantages} & \textbf{Disadvantages} \\
			\hline
			Linked List &  Easy implementation & Expensive cost \\
			\hline
			AVL-Tree & Shorter query time than Red-black Tree & Longer response time for the insert and delete operations \\
			\hline
			Red-black Tree &  Longer query time than AVL-Tree & Shorter response time for the insert and delete operations \\
			\hline
		\end{tabular}\label{tb:advAndDisadv}
	\end{center}\vspace{-1ex}
\end{table}

\subsubsection{Insertion} 
As described in Section~\ref{ssec:41}, the entries for each records should be encrypted in multiple copies. Therefore, any newly inserted records have also to undertake the same procedure. For example, assume that the data owner adds a new tuple $f$ to the existing database described in Fig.~\ref{Indexes}, which contains $a,b,c,d,e$, and $b<c<f<a<e<d$. In the manner described above, the data owner encrypts the tuple for each dimension using different keys. Afterwards, the data owner computes $b+f$, $c+f$, $f+a$, $f+e$, $f+d$ and encrypts them with different groups (Fig.~\ref{insert}) of keys. Finally, these encrypted values are uploaded to the server. 

\begin{figure}[t]
	\centerline{\includegraphics[width=0.8\columnwidth]{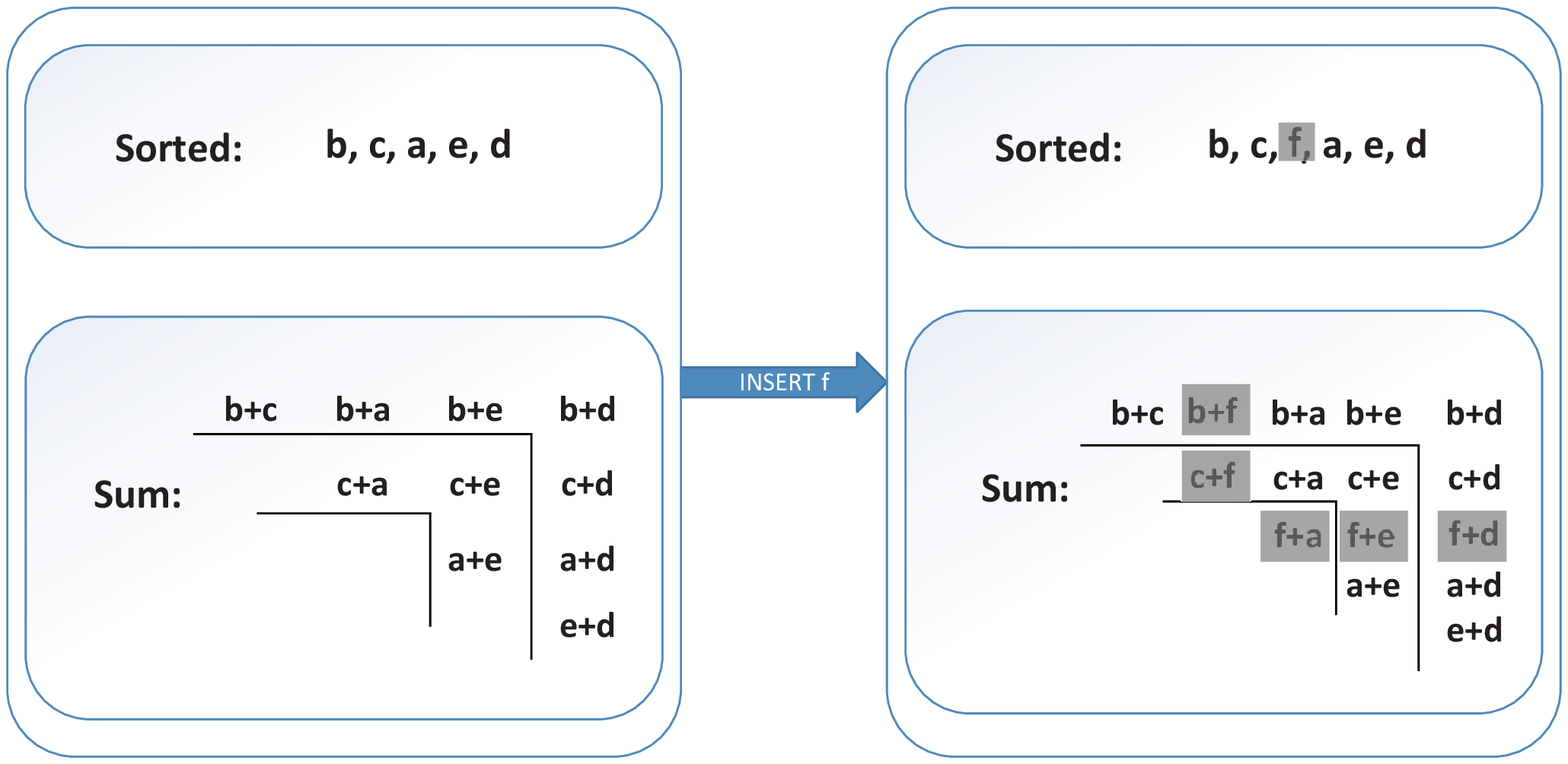}}
	\vspace{-1ex}\caption{Insert a new tuple $f$}
	\label{insert}
	\vspace{-1ex}\end{figure}

As depicted in Fig.~\ref{avl_insert_1}, the cloud server adopt an $AVL-tree$ to store encrypted sums of entries for each \textit{OOC} shown in Fig.~\ref{insert}. For instance, there are two \textit{OOC}s in the left part of Fig.~\ref{insert}, which show the \textit{OOC} for existing records, we implement two corresponding AVL-Trees to store the \textit{OOC}s, \eg one tree rooted at $b+d$ and contains all the sums in the same \textit{OOC}, and the other tree rooted at $c+e$ and contains two other sums, namely $c+a$ and $a+e$. As shown in Fig.~\ref{insert} and~\ref{avl_insert_2}, $b+f$, $f+d$ belongs to the same \textit{OOC} with $b+d$ and can be inserted into the corresponding positions in the AVL-Tree rooted with $b+d$. Similarly, the corresponding ciphertexts for $c+f$, $f+e$ will be inserted into another AVL-Tree rooted with $c+e$. Moreover, $f+a$ will be inserted to a new AVL-Tree rooted with $f+a$.

\begin{figure}[t]
	\centering
	\subfloat[Before insertion.]{
		\label{avl_insert_1}
		\includegraphics[width=0.45\columnwidth]{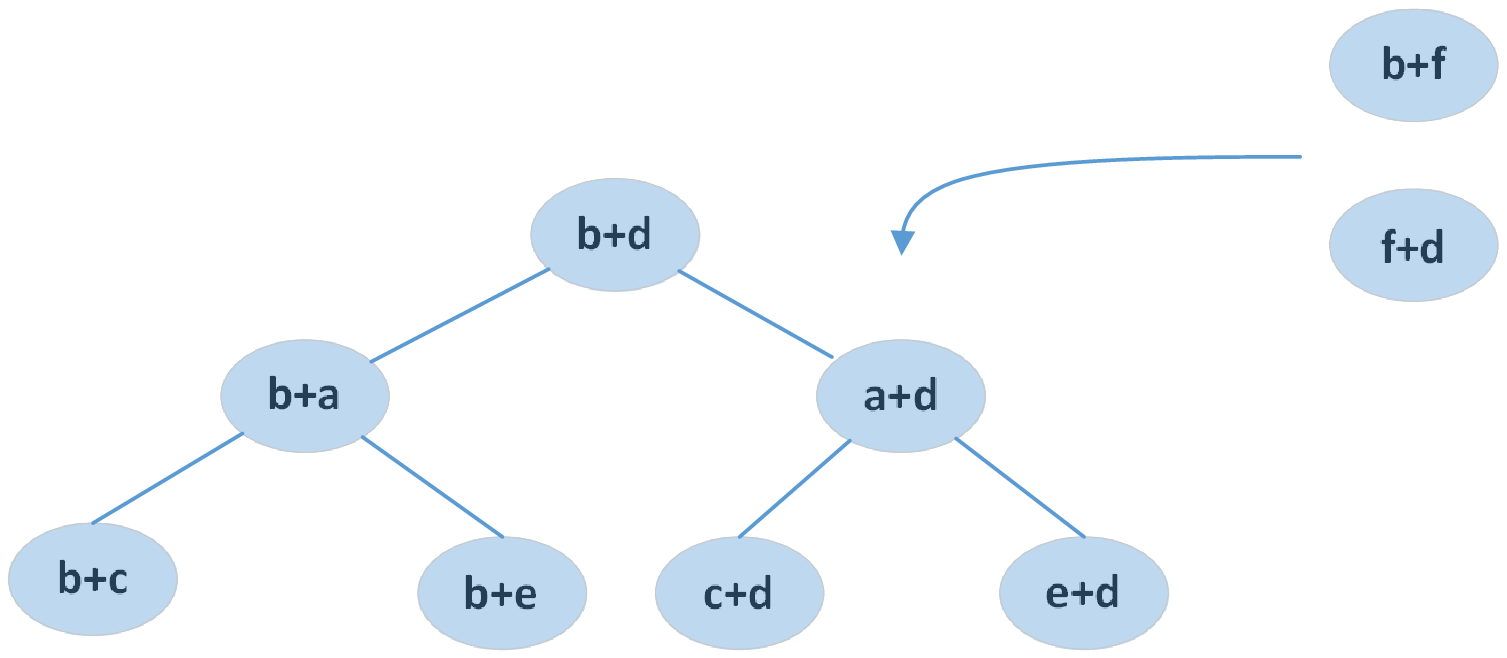}}
	\quad
	\subfloat[After insertion.]{
		\label{avl_insert_2}
		\includegraphics[width=0.45\columnwidth]{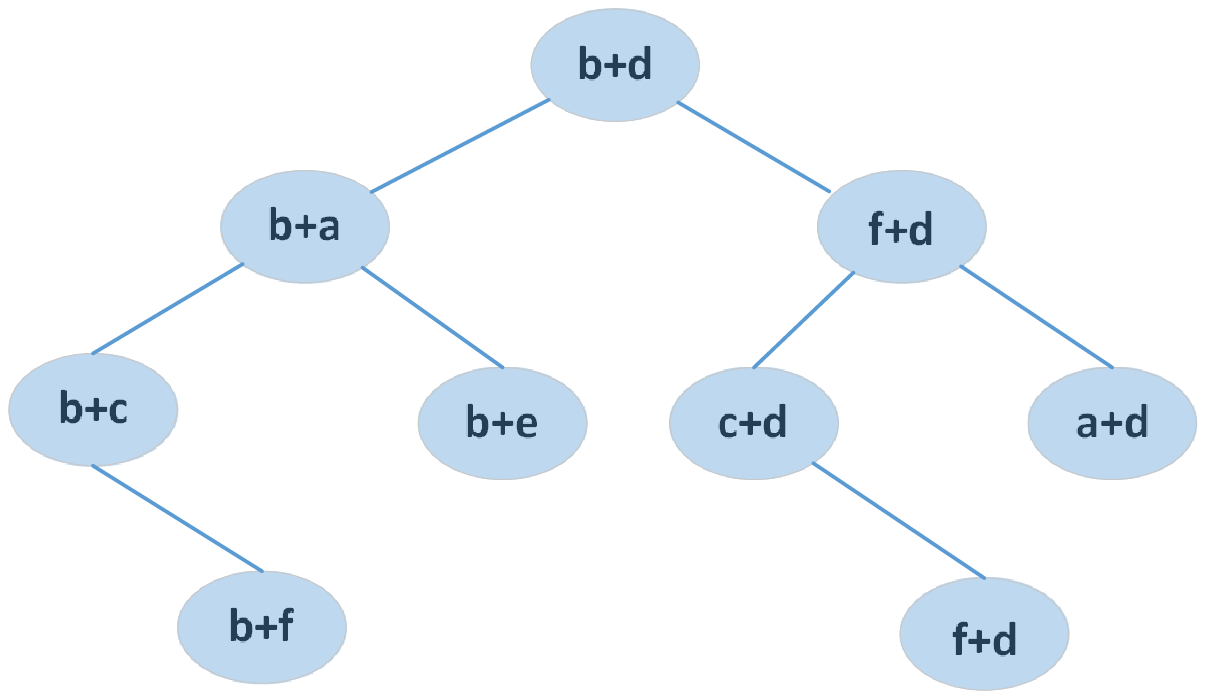}}
	\vspace{-1ex}\caption{Inserting $b+f$ and $f+d$ into the AVL-Tree}
	\label{fig:insertTuple}
	\vspace{-1ex}\end{figure}

\subsubsection{Deletion} 
On the other hand, data owners also may have to delete tuples from existing database. In this scenario, \textsc{scale} also need to update the corresponding indices that are associated with the deleted records. For example, assume that the data owner want to delete a tuple $e$ in $b, c, a, e, d$. In the manner described in Section~\ref{ssec:41}, data owners encrypt the tuple for each dimension using different keys. Besides, the data owner have already computed the sums including $b+e$, $c+e$, $a+e$, $d+e$ and encrypt them with different groups (Fig.~\ref{delete}) of keys. All the corresponding ciphertexts have been uploaded and stored in the cloud server. 

\begin{figure}[t]
	\centerline{\includegraphics[width=0.78\columnwidth]{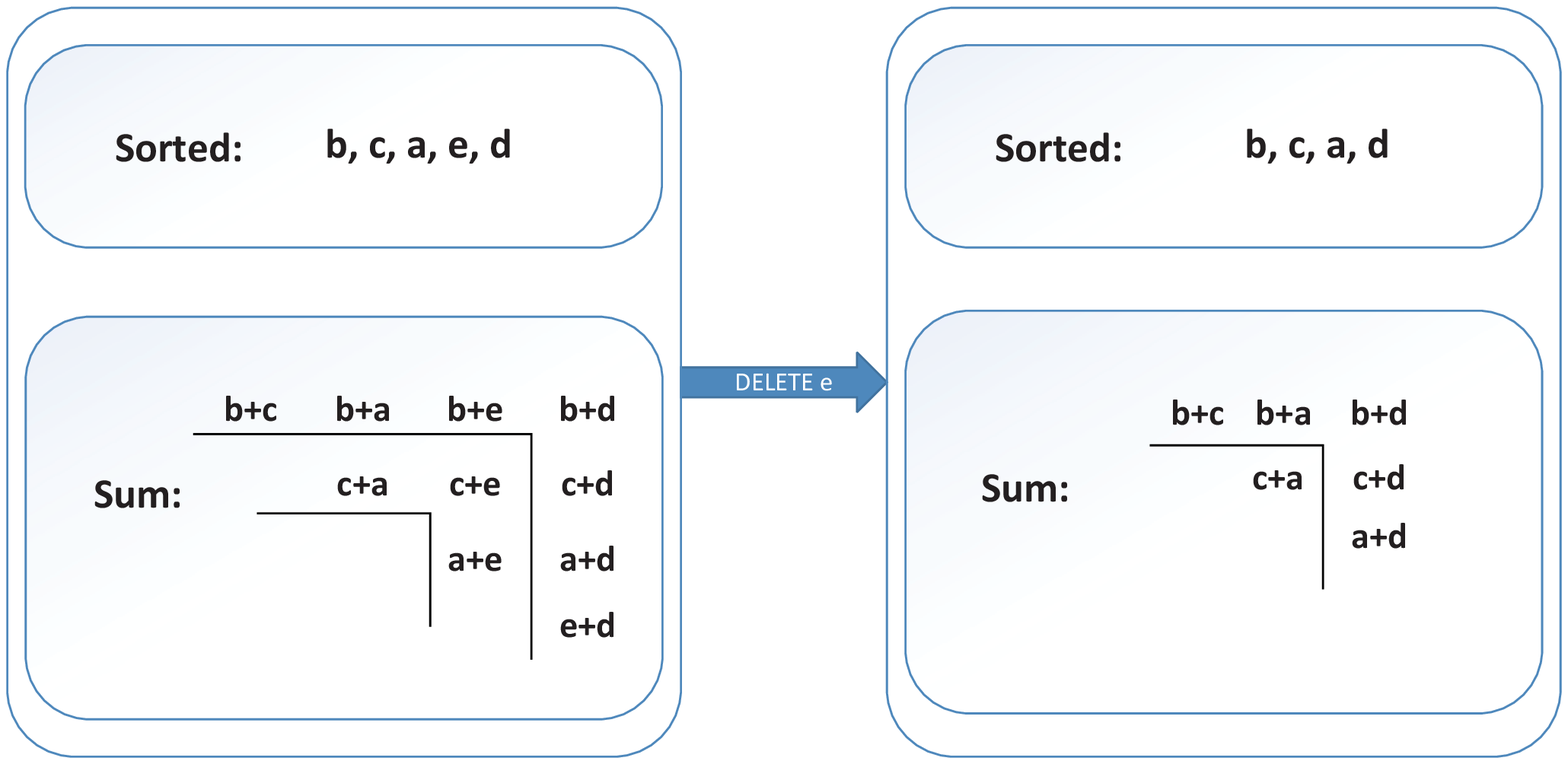}}
	\vspace{-1ex}\caption{Delete a tuple $e$}
	\label{delete}
	\vspace{-1ex}\end{figure}

As depicted in Fig.~\ref{avl_delete_1}, in \textsc{scale}, the cloud server use an $AVL-tree$ to store the sums in each \textit{OOC} with the same key. AVL-Tree structure provides efficient deletion efficiency. Whenever $e$ is deleted from the database, the corresponding sums with respect to $e$, \eg $b+e$, $e+d$, $c+e$ and $a+e$, shall also be removed from the corresponding AVL-Trees. In particular, $b+e$ and $e+d$ shall be removed from the AVL-Tree rooted at $b+d$ (as shown in Fig.~\ref{avl_delete_2}), which will be then balanced accordingly afterwards; similarly, $c+e$, $a+e$ will be removed from another AVL-Tree rooted at $c+e$. 

\subsubsection{Update}
Notably, all the existing data records are stored in ciphertext form according to our framework. In ciphertext space, the \textit{update} operation cannot be directly applied. Instead, it is interpreted as deleting an existing encrypted record and then insert a new encrypted record. 

\begin{figure}[t]
	\centering
	\subfloat[Before deletion.]{
		\label{avl_delete_1}
		\includegraphics[width=0.45\columnwidth]{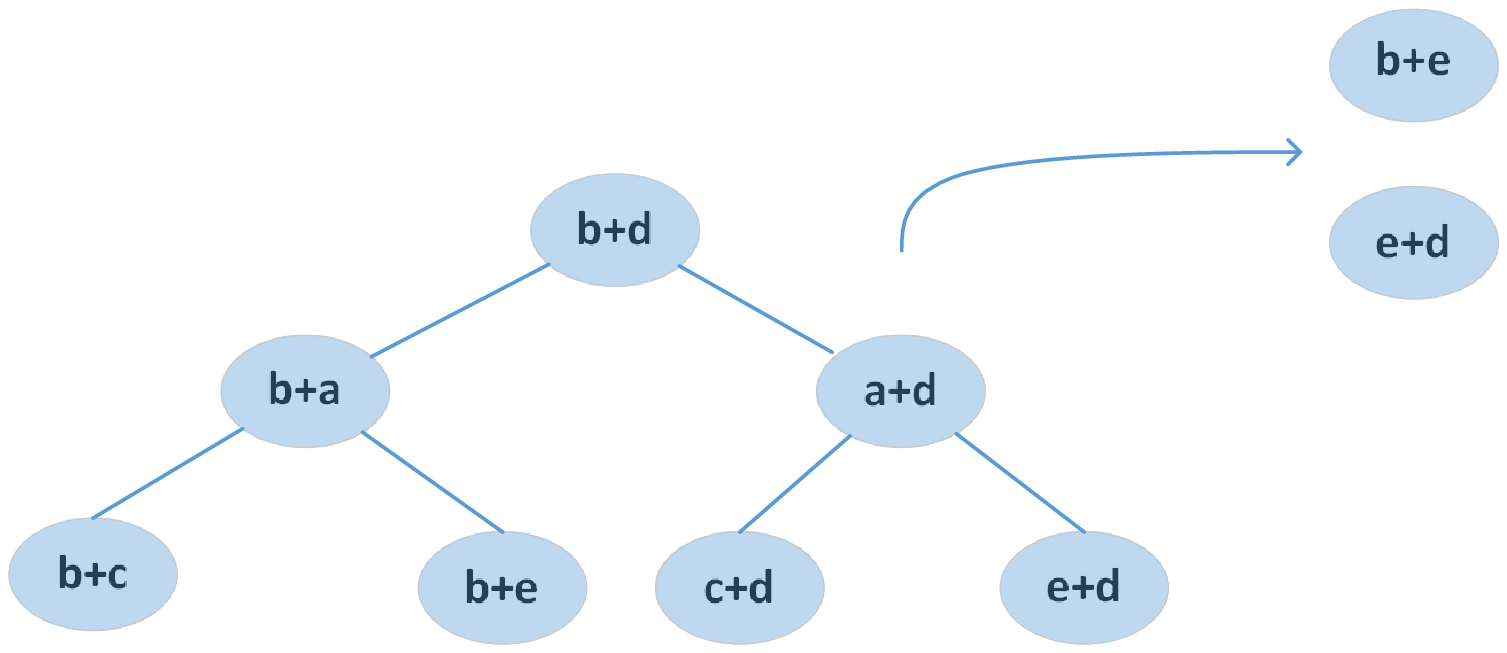}}
	\quad
	\subfloat[After deletion.]{
		\label{avl_delete_2}
		\includegraphics[width=0.45\columnwidth]{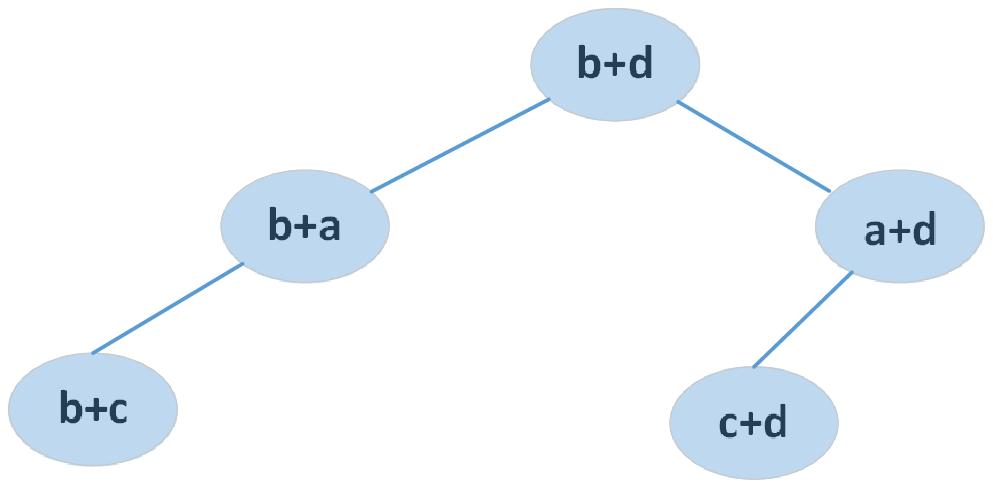}}
	\vspace{-1ex}\caption{Deleting $b+e$ and $e+d$ from the AVL-Tree}
	\label{fig:deleteTuple}
	\vspace{-1ex}\end{figure}

%

\vspace{0ex}\subsection{Security Analysis}\label{ssec:secana}
Insert and delete operations on AVL-Tree may provide opportunities for side channel attacks. But it is not the key point of this paper and some existing work can solve this problem. It would not be discussed here. The presented \textsc{scale} framework is constructed based on ORE scheme proposed in~\cite{Lewi2016Order}, which is secure with leakage function $\mathcal{L}_{BLK}$. The particular lemma is defined as follows.
\begin{lemma}
	The ORE scheme is secure with leakage function $\mathcal{L}_{BLK}$ assuming that the adopted \emph{pseudo random function} (PRF) is secure and the adopted hash functions are modeled as random oracles. Here, 
	$\mathcal{L}_{BLK}(m_1,\ldots,m_t)=\{(i,j,BLK(m_i,m_j))|1 \leq i < j \leq t\}$ and $BLK(m_i, m_j)=(ORE.Compare(m_i, m_j),\mathtt{ind}_{\mathtt{diff}}(m_i,m_j))$, in which $\mathtt{ind}_{\mathtt{diff}}$ is the first differing block function that is the first index $i\in[n]$ such that $x_i=x_j$ for all $j<i$ and $x_i \neq x_j$. (The proof of this lemma is in Appendix 4.1 in~\cite{Lewi2016Order} and is omitted here.)
\end{lemma}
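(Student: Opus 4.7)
The plan is to follow the standard simulation-based template for proving an ORE scheme secure against an adaptive adversary: exhibit a PPT simulator $\mathcal{S}$ that, on input only $\mathcal{L}_{BLK}(m_1,\ldots,m_t)$, produces transcripts of ciphertexts whose joint distribution is computationally indistinguishable from that of the real $\mathsf{Encrypt}$ oracle. The argument proceeds by a hybrid sequence $H_0,H_1,H_2,H_3$ where $H_0$ is the real experiment $\mathrm{Game}_{\mathcal{R},\mathcal{A}}$ and $H_3$ is the ideal experiment $\mathrm{Game}_{\mathcal{S},\mathcal{A}}$, and we bound each hop by the PRF advantage or by statistical distance in the random-oracle model.

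The first hybrid $H_0 \to H_1$ replaces the PRF $F_{k}(\cdot)$ used inside $\mathsf{Encrypt}$ to derive per-prefix subkeys with a lazily sampled truly random function $\tilde{F}(\cdot)$. Any adversary distinguishing $H_0$ from $H_1$ yields a PRF distinguisher of the same advantage, so the hop is bounded by $\mathrm{Adv}^{\mathrm{prf}}_{F}(\lambda)$. In $H_1 \to H_2$, we observe that once $\tilde{F}$ is truly random, each left-ciphertext component outside the first-differing-block position is a uniformly random string independent of the plaintext; we therefore replace those components with fresh uniform samples. This hop is perfectly indistinguishable. The final hop $H_2 \to H_3$ reprograms the random oracles $H(\cdot)$ that mask the right-ciphertext comparison bits: for each pair $(i,j)$, the simulator reads $(i,j,\mathsf{ORE.Compare}(m_i,m_j),\mathtt{ind}_{\mathtt{diff}}(m_i,m_j))$ from $\mathcal{L}_{BLK}$ and programs the oracle so that the decoded comparison bit at block $\mathtt{ind}_{\mathtt{diff}}(m_i,m_j)$ equals the revealed comparison result, while all other block positions decode to uniformly random (and thus uninformative) values. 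Consistency of the programming is guaranteed because, by definition of $\mathtt{ind}_{\mathtt{diff}}$, the subkeys used to query $H$ at non-differing blocks coincide across $m_i$ and $m_j$, whereas at the differing block and beyond they are independent fresh random strings by $H_1$.

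Given this simulator, the core correctness check is that for every pair $(i,j)$ queried by $\mathcal{A}$, the simulated transcript reproduces exactly the information revealed by $\mathcal{L}_{BLK}$: the index of the first differing block (recoverable from which block's subkeys collide) and the comparison verdict at that block. Because $\mathcal{L}_{BLK}$ supplies both pieces of data, the simulator can honor any adaptive query without ever knowing the underlying plaintexts. Adaptivity is handled by lazy sampling: $\mathcal{S}$ maintains a table of previously programmed oracle points and, whenever the adversary submits a new message $m_t$, extends the table consistently with the freshly obtained leakage entries $\{\mathcal{L}_{BLK}(m_i,m_t)\}_{i<t}$.

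The main obstacle I anticipate is the random-oracle programming step: we must argue that with overwhelming probability no programming collision occurs, i.e., the adversary never queries the oracle on a point that the simulator has already committed to inconsistently. This is controlled by a union bound over at most $q_H \cdot t$ events, each of probability $2^{-\lambda}$ because the subkeys fed into $H$ are uniform $\lambda$-bit strings (after $H_1$), yielding a $\mathrm{negl}(\lambda)$ term. Collecting the three hops gives
\[
\bigl| \Pr[\mathrm{Game}_{\mathcal{R},\mathcal{A}}(\lambda)=1] - \Pr[\mathrm{Game}_{\mathcal{S},\mathcal{A}}(\lambda)=1] \bigr|
\;\le\; \mathrm{Adv}^{\mathrm{prf}}_{F}(\lambda) + q_H\cdot t \cdot 2^{-\lambda},
\]
which is negligible, establishing $\mathcal{L}_{BLK}$-adaptive security in the sense of Definition~\ref{df:AdaptivelySecure}.
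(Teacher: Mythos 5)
The paper does not prove this lemma at all---it explicitly defers to Appendix 4.1 of the cited ORE paper by Lewi and Wu---and your hybrid/simulation argument (PRF-to-random-function hop, then random-oracle programming driven by the pairwise $(\mathsf{ORE.Compare},\mathtt{ind}_{\mathtt{diff}})$ leakage) is essentially a faithful reconstruction of that source's proof, so you are taking the same approach as the intended one. The only point worth tightening is the multi-message case: the simulator must assign subkeys consistently with the prefix-equivalence structure induced by \emph{all} pairwise $\mathtt{ind}_{\mathtt{diff}}$ values simultaneously (keeping subkeys equal on common prefixes, fresh afterwards), which your ``lazy sampling'' handles implicitly but should be stated explicitly.
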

In order to formally prove the security of \textsc{scale}, we extend $\mathcal{L}$-adaptively-secure model for keyword searching scheme as shown in Definition~\ref{df:AdaptivelySecure}.
\begin{theorem}
	\label{theorem2}
	Let the adopted PRF in ORE is secure. The presented \textsc{scale} framework is $\mathcal{L}$-adaptively-secure in the (programmable) random oracle model, where the leakage function collection $\mathcal{L} = (\mathcal{L}_{Encrypt}, \mathcal{L}_{Query}, \mathcal{L}_{Insert}, \mathcal{L}_{Delete})$ is defined as follows,
	\begin{align*}
		\footnotesize
		\mathcal{L}_{Encrypt}= \mathcal{L}_{BLK}(\cup_{k=1}^{d}X_k^{(n)}),	\mathcal{L}_{Query}= \mathcal{L}_{BLK}(\cup_{k=1}^{d}X_k^{(n)'}),\\
		\mathcal{L}_{Insert}= \mathcal{L}_{BLK}(\cup_{k=1}^{d}X_k^{(n+1)}),
		\mathcal{L}_{Delete}= \mathcal{L}_{BLK}(\cup_{k=1}^{d}X_k^{(n)})
	\end{align*}
	where $X_k^{(n)}=\cup_{t=1}^{(\lceil\frac{2*n-3}{4}\rceil)}(Y_t^{k})$, $X_{k}^{(n)'}=\cup_{t=1}^{(\lceil\frac{2*n-3}{4}\rceil)}(Y_t^{k}\cup \mathsf{Enc}_t(q) \cup \mathsf{Enc}_t(2q))$ and $Y_t^{k}=\{\mathsf{Enc}(p_t[k]+p_j[k])|t<j<n-t+1\}\cup\{\mathsf{Enc}(p_j[k]+p_{(n-t+1)}[k])|t<j<n-t+1\}$.
	
	The advantage for any probabilistic polynomial-time adversary is,
	\begin{equation*}
		\begin{split}
			|{\rm Pr}&[{\rm Game}_{\mathcal{R},\mathcal{A}(\lambda)}=1]-{\rm Pr}[{\rm Game}_{\mathcal{S}, \mathcal{A}(\lambda)} = 1]|\\
			&\leq {\rm negl}(\lambda)=d\cdot({\rm negl}^{ORE}(\lambda) + (2n-1){\rm poly}(\lambda)/2^{\lambda}).
		\end{split}
	\end{equation*}
\end{theorem}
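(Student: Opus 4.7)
The plan is to prove the theorem by a standard simulation-based hybrid argument that reduces the distinguishing advantage between ${\rm Game}_{\mathcal{R},\mathcal{A}}$ and ${\rm Game}_{\mathcal{S},\mathcal{A}}$ to the security of the underlying ORE primitive (via the stated lemma) plus a negligible term accounting for random-oracle collisions in the per-OOC key derivation. First I would explicitly construct the simulator $\mathcal{S}$. On input only the leakage $\mathcal{L}_{Encrypt}=\mathcal{L}_{BLK}(\cup_{k=1}^{d} X_k^{(n)})$, the simulator invokes the ORE simulator $\mathcal{S}^{ORE}$ (guaranteed by the lemma) once per key: $d$ times for the ``plain-value'' keys $keys[1],\ldots,keys[d]$ that encrypt the coordinates of $E(P)$, and $d\cdot\lceil\frac{2n-3}{4}\rceil$ times for the per-OOC keys that encrypt the pairwise sums in $E(\Phi)$. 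Each invocation receives the corresponding slice of $\mathcal{L}_{BLK}$ and outputs ciphertexts that are, by the lemma, computationally indistinguishable from honestly generated ORE ciphertexts. The simulator also lazily programs the random oracle that models $mk\oplus Id_i$ so that each $Id_i$ consistently maps to a fresh uniform key.

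Next I would organize the reduction as a sequence of hybrids $H_0,H_1,\ldots,H_K$, with $K=d(1+\lceil\frac{2n-3}{4}\rceil)$ keys in total, where $H_0={\rm Game}_{\mathcal{R},\mathcal{A}}$ and $H_K={\rm Game}_{\mathcal{S},\mathcal{A}}$. In the transition $H_i\to H_{i+1}$, exactly one batch of ciphertexts produced under a single ORE key is swapped from real encryption to the output of $\mathcal{S}^{ORE}$ on the corresponding $\mathcal{L}_{BLK}$ slice. A distinguisher between two adjacent hybrids can be turned directly into an adversary against the ORE security lemma, so the per-step gap is at most ${\rm negl}^{ORE}(\lambda)$. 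Summing over the $K=O(dn)$ transitions and folding the constant into the $d$ factor, the ORE contribution becomes $d\cdot{\rm negl}^{ORE}(\lambda)$ as claimed.

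Adaptivity from Query, Insert, and Delete is handled by invoking the adaptive version of the ORE lemma inside each hybrid: whenever $\mathcal{A}$ issues a $Query()$ the simulator extends the $t$-th OOC's leakage list by $\mathsf{Enc}_t(q)$ and $\mathsf{Enc}_t(2q)$ and requests the corresponding new ciphertexts from $\mathcal{S}^{ORE}$; $Insert()$ simply replays the Encrypt argument on the extended set $X_k^{(n+1)}$ for the OOCs the new tuple touches; $Delete()$ restricts the leakage list by removing the obsolete ciphertexts, which is trivially simulatable since the simulator already knows which ciphertexts it previously produced. The $(2n-1){\rm poly}(\lambda)/2^{\lambda}$ term in the bound comes from bounding, by a union bound over the at most $2n-1$ random oracle queries used to derive the $\kappa=\lceil\frac{2n-3}{4}\rceil$ per-OOC keys per dimension, the probability that two independent $Id_i$'s collide and produce correlated keys; combining this with the per-dimension factor $d$ yields the second summand.

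The main obstacle I expect is verifying that the per-OOC simulations remain mutually independent in the adversary's view, in particular during $Insert()$. When a newly inserted tuple $f$ is sorted into the existing order, its pairwise sums with previous tuples split across several OOCs, and one must argue that the only ordering information revealed across OOCs is the order of single coordinates already leaked by $E(P)$ itself; no cross-OOC comparison is enabled because the keys are independent random oracle outputs. Once this partition argument is pinned down carefully, each OOC's leakage list is self-contained and can be fed to its own $\mathcal{S}^{ORE}$, after which the final advantage bound follows by a straightforward triangle inequality over the hybrids and the union bound over key-derivation collisions.
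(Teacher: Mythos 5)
Your overall skeleton --- a simulation-based argument that feeds each independently-keyed batch of ORE ciphertexts to the ORE simulator and charges the residual to random-oracle events --- is the same route the paper takes, and your per-key hybrid organization is if anything a more explicit version of the paper's per-\textit{OOC} independence claim. But there is a genuine gap in where you locate the random-oracle term. You attribute $(2n-1){\rm poly}(\lambda)/2^{\lambda}$ to collisions among the $Id_i$'s in the key-derivation trick $key_i = mk\oplus Id_i$. That trick is only an implementation remark in the paper, and the accounting does not work out: there are $\kappa=\lceil\frac{2n-3}{4}\rceil$ identifiers per dimension, so a pairwise-collision union bound would give something on the order of $\kappa^{2}/2^{\lambda}$, and nothing in that route produces the quantity $2n-1$. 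In the paper's proof the random oracle is the index hash $h$ of the ``Accessing the pairs of sums'' subsection --- the map from a pair of ciphertexts in $E(P)$ to the location of $\mathsf{Enc}(p_\alpha[j]+p_\beta[j])$ in $E(\Phi)$ --- which the simulator replaces by a lazily-programmed oracle; the factor $2n-3$ (becoming $2n-1$ after an insertion enlarges the outermost \textit{OOC}) is a union bound over the ciphertexts in the largest \textit{OOC}. Your proposal never simulates $h$ at all, yet it is part of what the cloud receives and is exactly the component that forces the ``(programmable) random oracle model'' qualifier in the theorem, so this is a missing step rather than a cosmetic difference.

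Two smaller points. First, your claim that summing over $K=O(dn)$ hybrid transitions can be ``folded into the $d$ factor'' to yield $d\cdot{\rm negl}^{ORE}(\lambda)$ is not literally correct, since $n$ is not a constant; the paper avoids this by arguing per dimension (where it asserts, somewhat loosely, a single ${\rm negl}^{ORE}(\lambda)$ charge for all $\kappa$ keys) and then multiplying by $d$. The bound survives either way because a polynomial multiple of a negligible function is negligible, but the stated constant would not. Second, your treatment of $Delete()$ as ``restricting the leakage list'' contradicts the theorem you are proving: $\mathcal{L}_{Delete}$ is defined to equal $\mathcal{L}_{Encrypt}$ precisely because a semi-honest cloud is not assumed to discard the deleted ciphertexts, which is the observation the paper uses to dispose of that case.
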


\begin{proof}See in Appendix~\ref{apxb}.
\end{proof}

\vspace{-1ex}\subsection{Complexity Analysis}\label{ssec:compcompana}

In the encryption phase, the plaintext data from the data owner can be sorted and encrypted in advance. We need $O(d+\lceil\frac{2*n-3}{4}\rceil)$ encryption operations every time when a user submits a query following Algorithm \ref{encryptquery}.

During the querying phase, our scheme replaces the original plaintext subtraction and comparison operations with a limited number of comparisons over ciphertext. The time taken for encryption and ciphertext comparisons is only affected by the block size in ORE, key length in AES, and plaintext length. Therefore, we have not changed the main logic for dynamic skyline query processing. Hence, the complexity for the query processing phase in our scheme is consistent with that of~\cite{borzsony2001skyline}. That is, the complexity is $O(n^2)$ for the worst case.

Afterwards, we shall discuss the complexity for inserting and deleting operations. In particular, the main complexity in insertion and deletion of records lie in the update of corresponding AVL-Trees for the sums of records pairs. Notably, the number of AVL-Trees to store these sums should be equal to the number of \textit{OOC}s. As described above, the number of \textit{OOC}s is $\lceil\frac{2*n-3}{4}\rceil$. The time complexity of inserting and deleting elements should be $\lceil\frac{2*n-3}{4}\rceil$ times the complexity of inserting and deleting elements in different data structures. It's important to note that we need to find the corresponding positions before deleting and inserting elements. Table \ref{tb:insertAndDeleteTC} shows the time complexity for inserting and deleting elements by adopting different data structures other than AVL-Tree. 

\begin{table}[t]
	\small
	\caption{The time complexities of inserting and deleting a record using linked list, AVL-Tree, and Red-black Tree}\vspace{-1ex}
	\begin{center}
		\begin{tabular}{|p{2.6cm}|p{2cm}|p{2cm}|}
			\hline
			\textbf{Data Structure} & \textbf{Insertion} & \textbf{Deletion} \\
			\hline
			Linked list &  $O(n^2)$ & $O(n^2)$ \\
			\hline
			AVL-Tree & $O(nlogn)$ & $O(nlogn)$ \\
			\hline
			Red-Black tree &  $O(nlogn)$  & $O(nlogn)$ \\
			
			\hline
		\end{tabular}\label{tb:insertAndDeleteTC}
	\end{center}\vspace{-1ex}
\end{table}

\begin{figure*}[t]
	\centering
	\subfloat[CORR] { \label{impactn:a}
		\includegraphics[width=0.24\columnwidth]{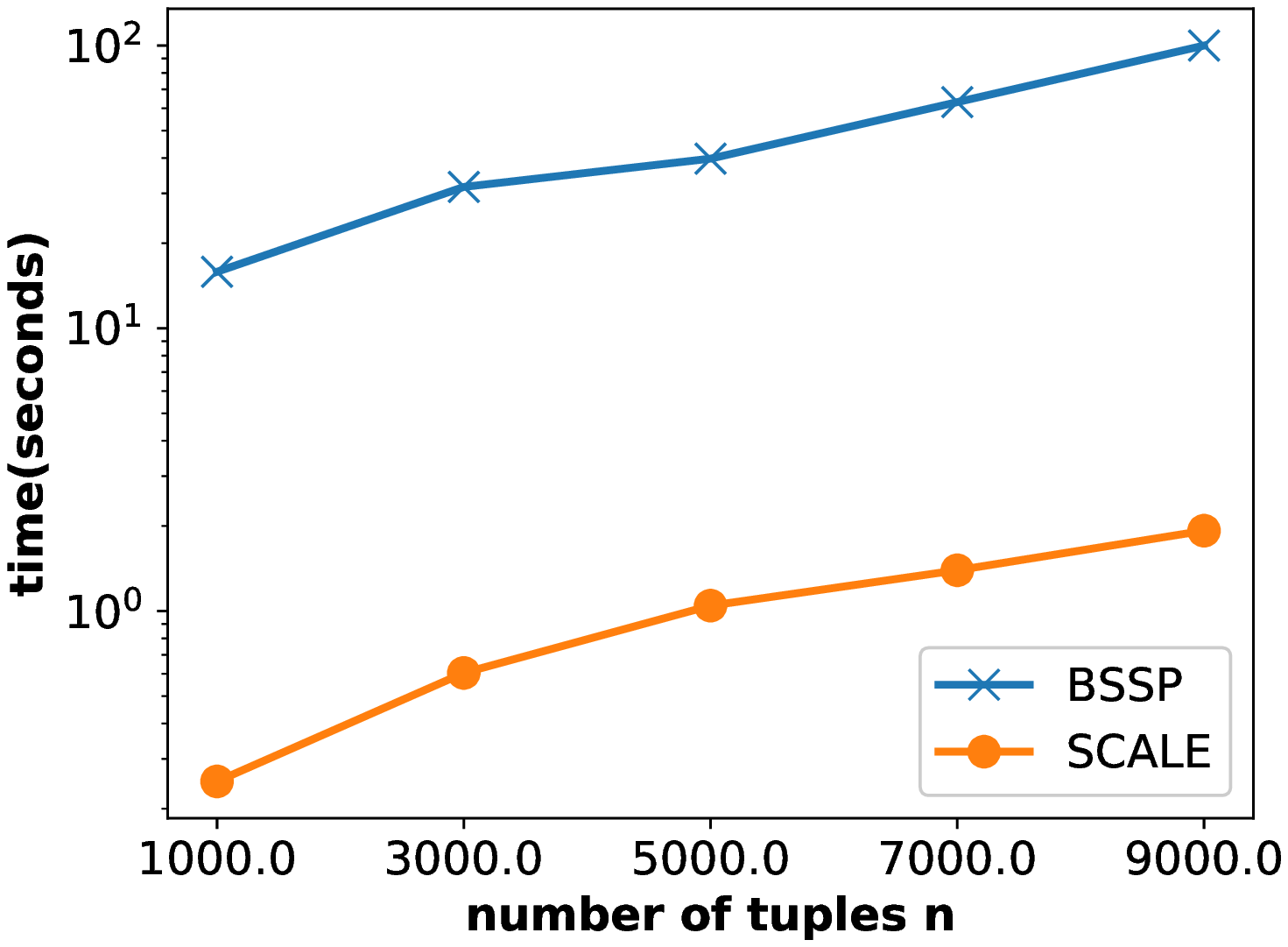}
	}
	\subfloat[ANTI] { \label{impactn:b}
		\includegraphics[width=0.24\columnwidth]{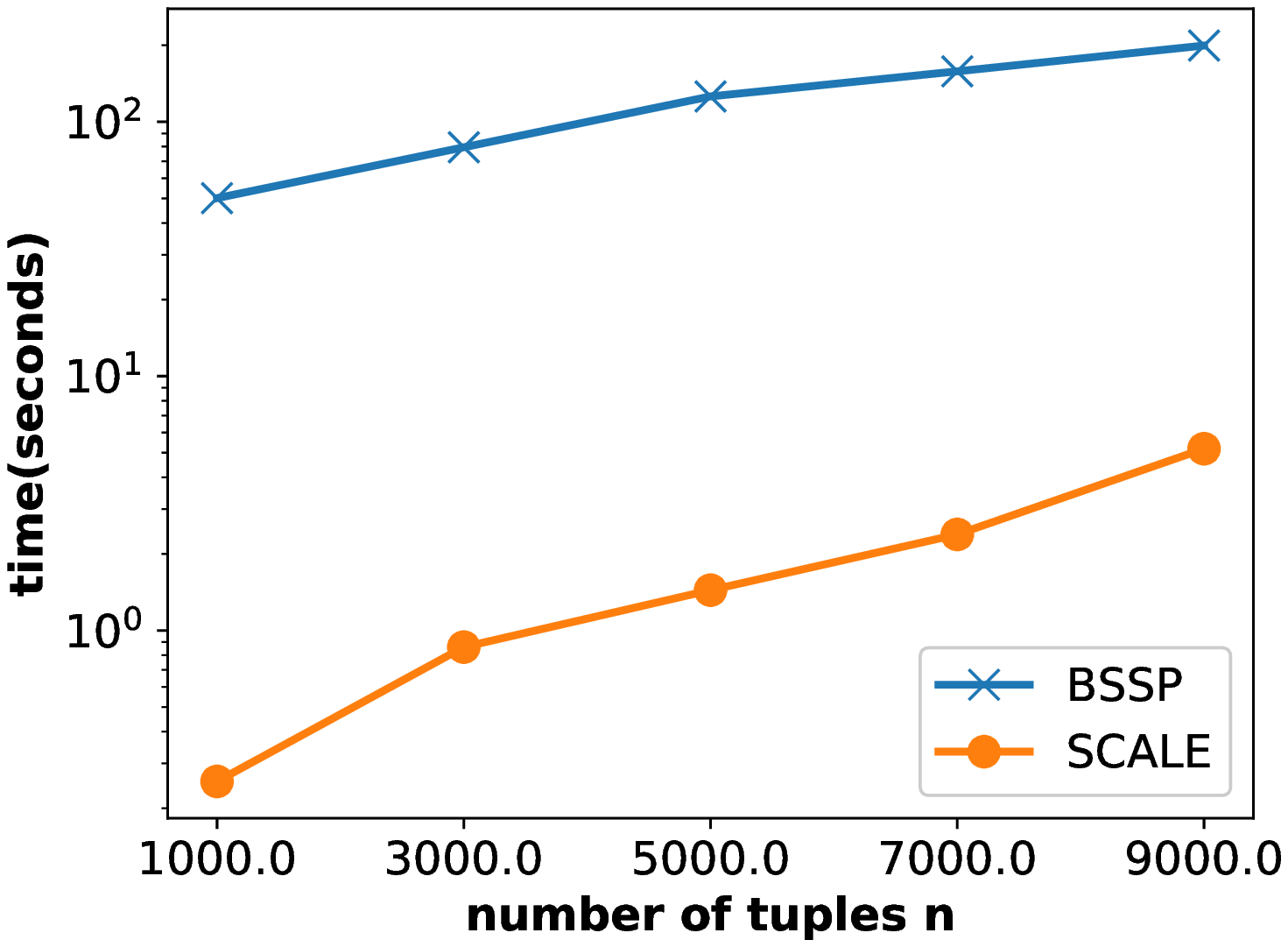}
	}
	\subfloat[INDE] { \label{impactn:c}
		\includegraphics[width=0.24\columnwidth]{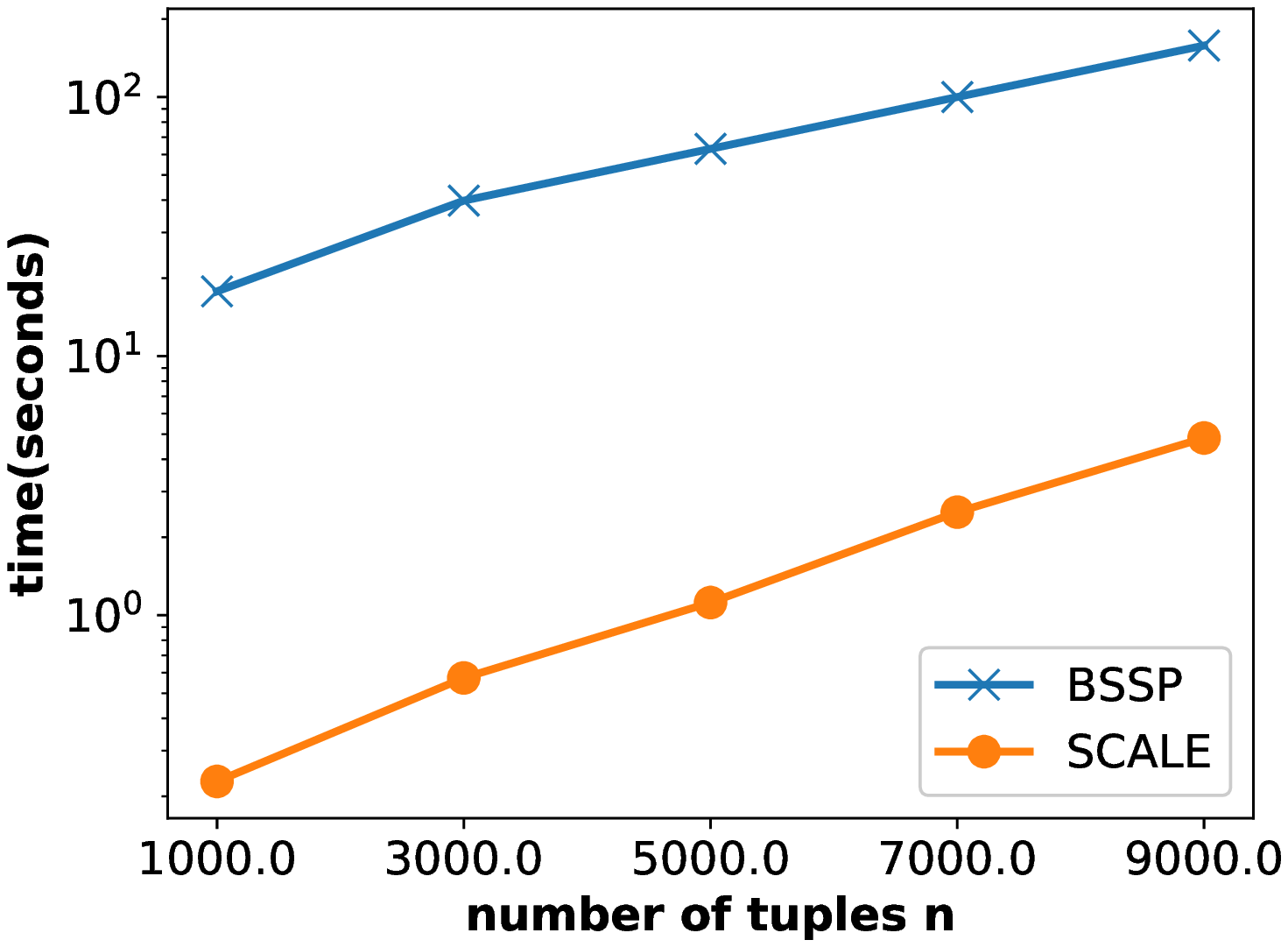}
	}
	\subfloat[NBA] { \label{impactn:d}
		\includegraphics[width=0.24\columnwidth]{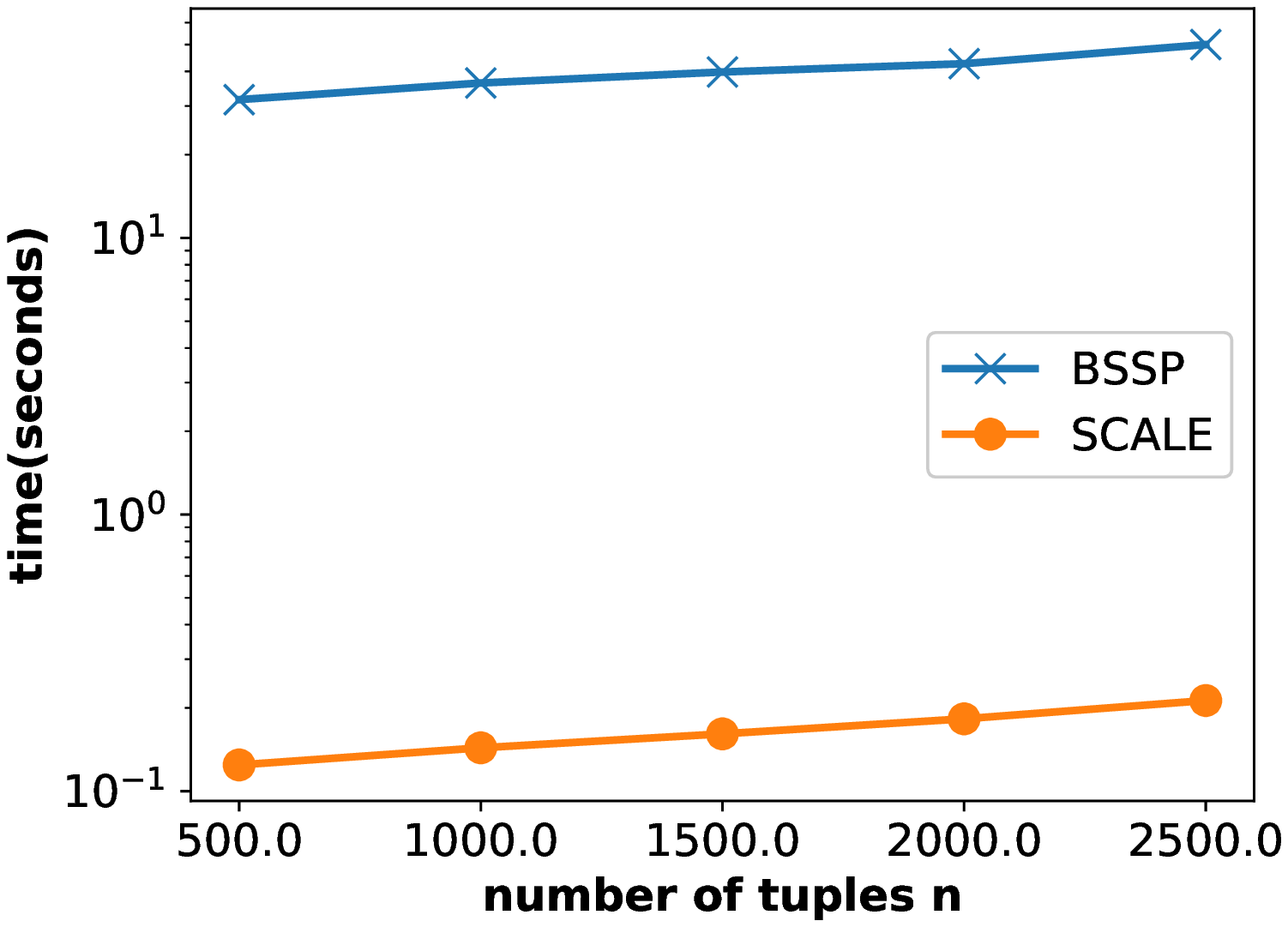}
	}
	\vspace{-1ex}\caption{Response time by varying the number of tuples (with $d=3, block=16, K=256$)}
	\label{impactn}
	\vspace{-2ex}\end{figure*}

%

\begin{figure*}[t]
	\centering
	\subfloat[Effect of $d$ (with $block=16, K=256$)]{ \label{impact:m}
		\includegraphics[width=0.3\columnwidth,height=3.1cm]{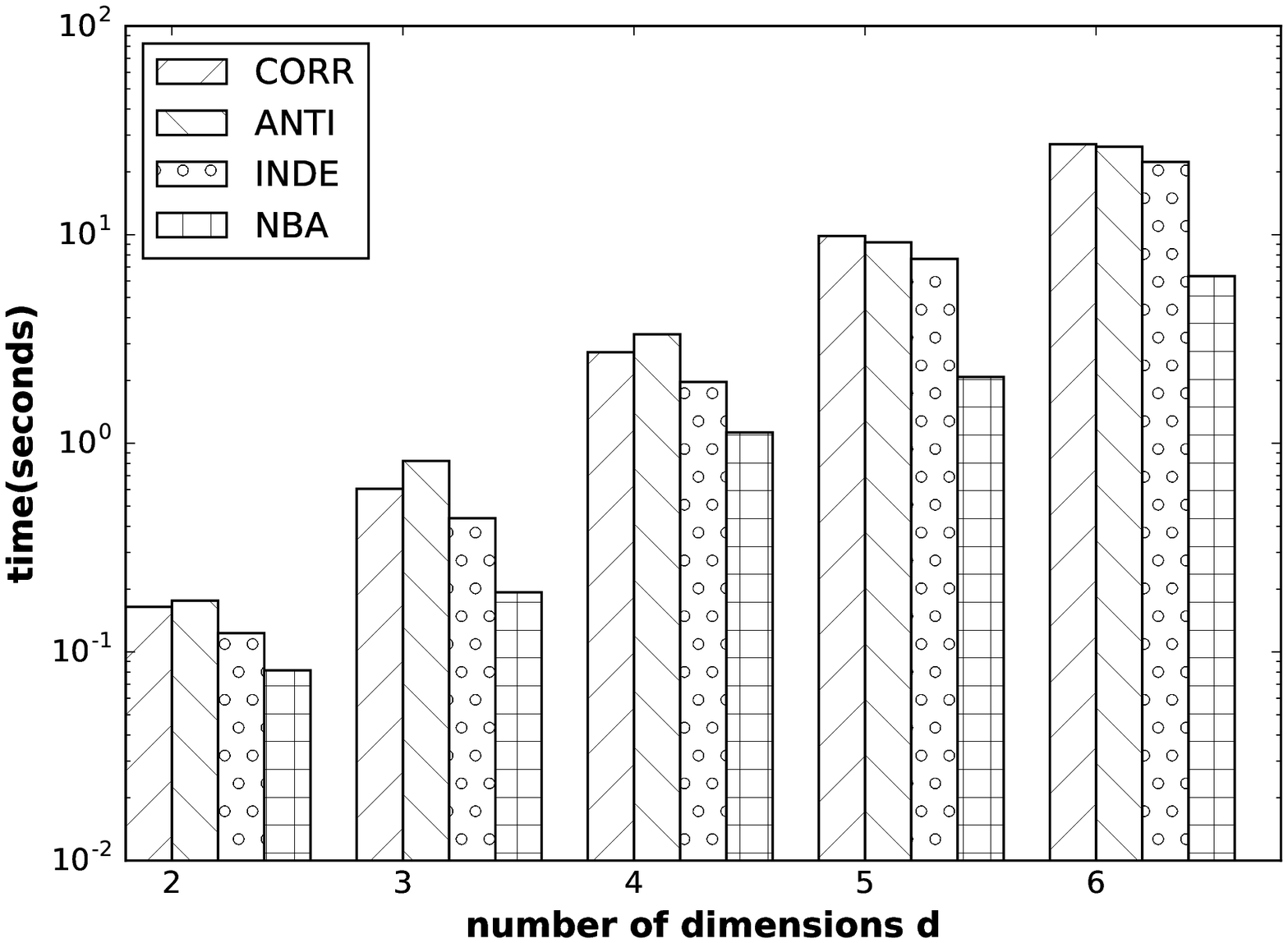}
	}\quad
	\subfloat[Effect of block size (with $d=3, K=256$)] { \label{impact:b}
		\includegraphics[width=0.3\columnwidth,height=3.1cm]{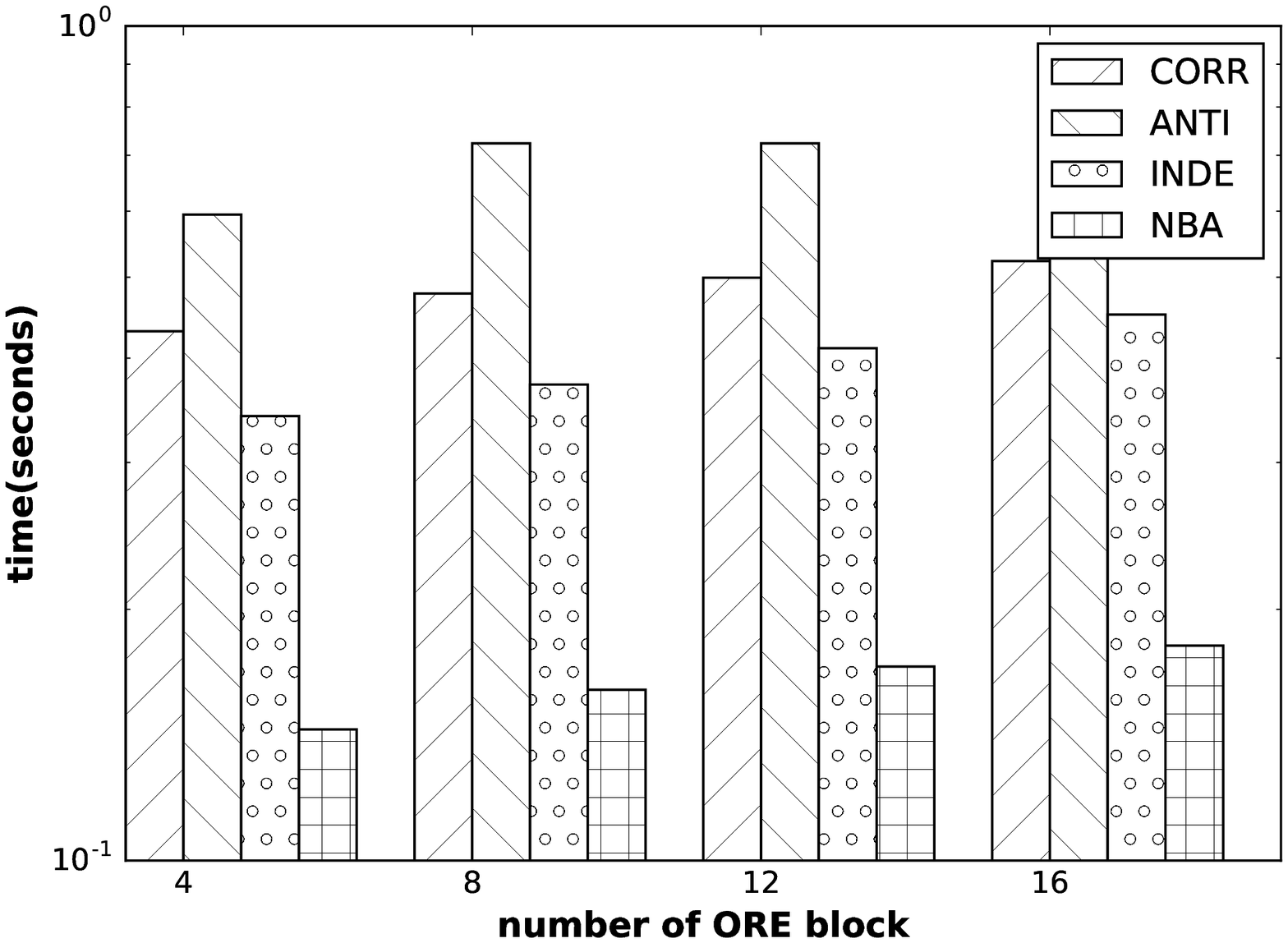}
	}\quad
	\subfloat[Effect of key length (with $block=16, d=3$)] { \label{impact:key}
		\includegraphics[width=0.3\columnwidth,height=3.1cm]{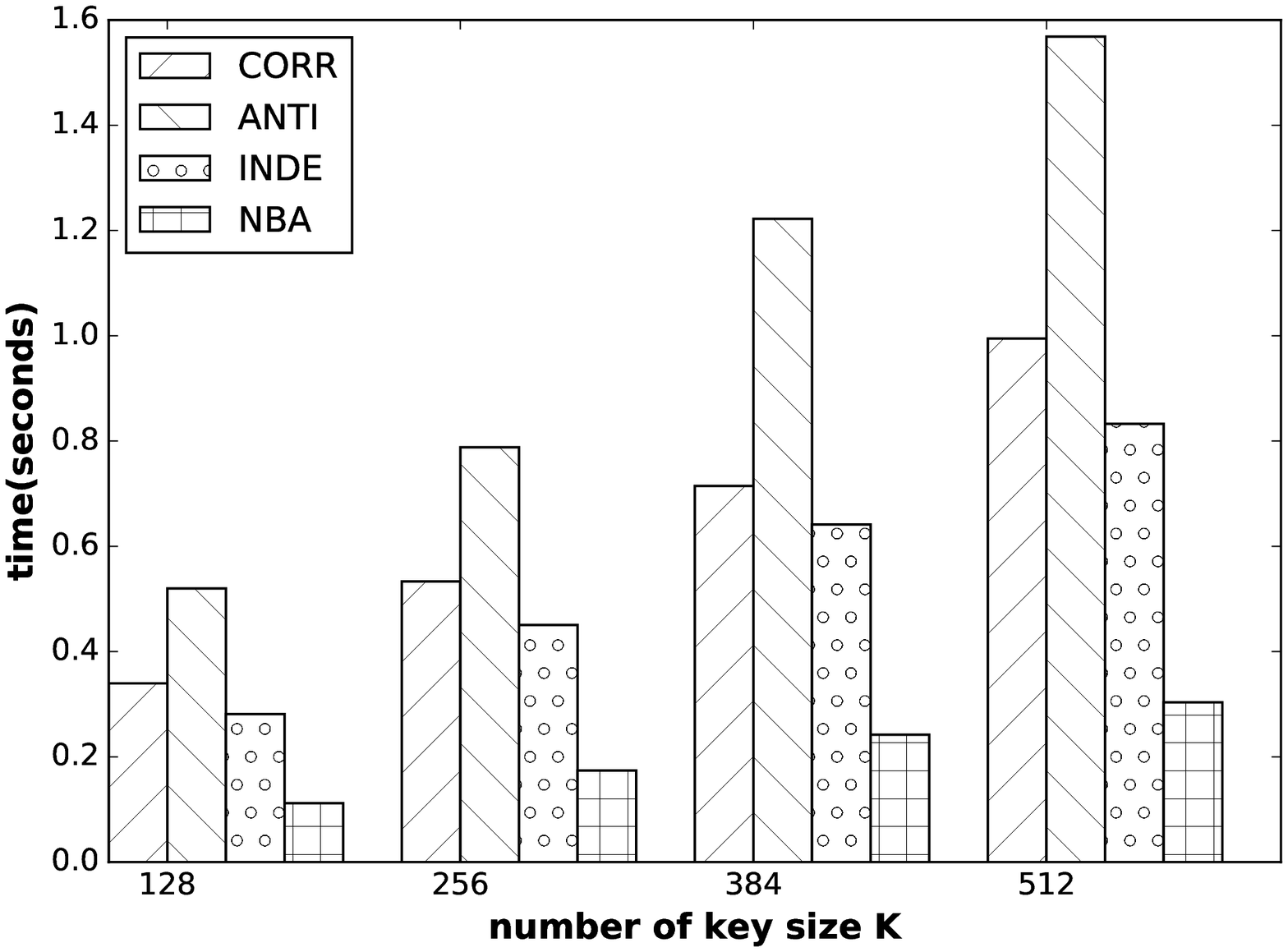}
	}
	\vspace{-1ex}\caption{The effects of different parameters ($n=2500$)}
	\label{impactall}
	\vspace{-3ex}
\end{figure*}

\begin{figure*}[t]
	\centering
	\subfloat[INDE] { \label{insert:list}
		\includegraphics[width=0.24\columnwidth]{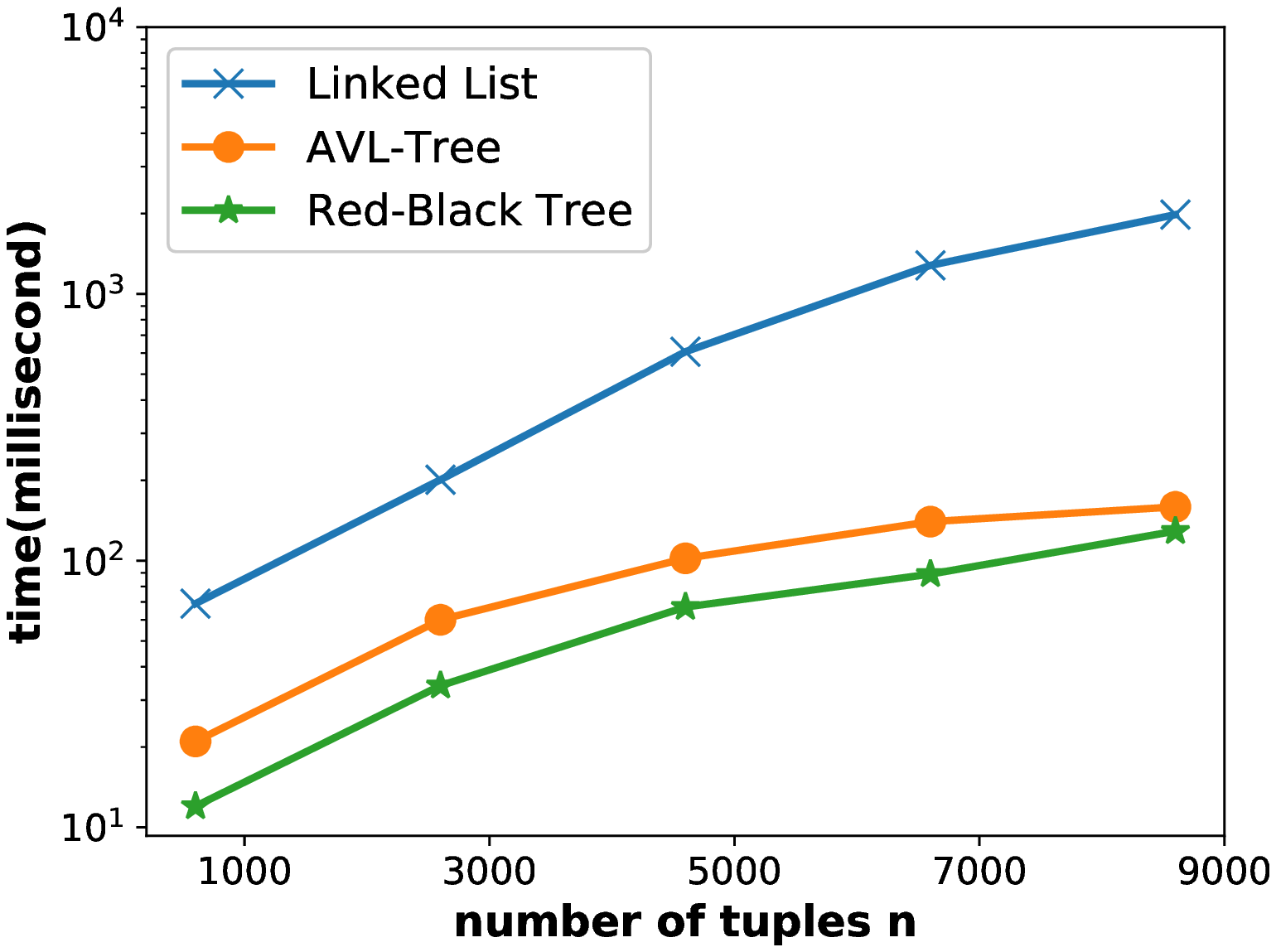}
	}
	\subfloat[CORR] { \label{insert:avl}
		\includegraphics[width=0.24\columnwidth]{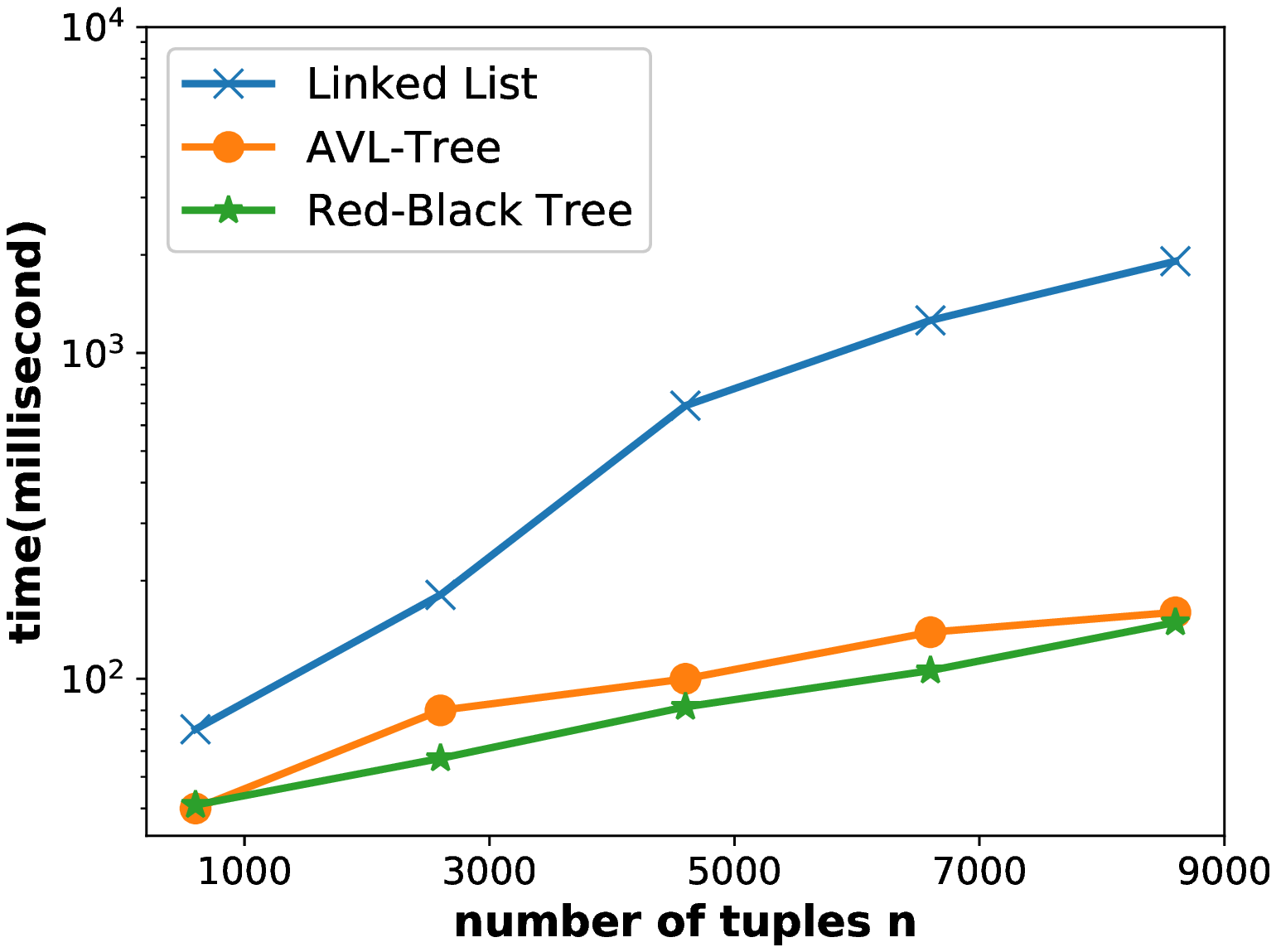}
	}
	\subfloat[ANTI] { \label{insert:rbt}
		\includegraphics[width=0.24\columnwidth]{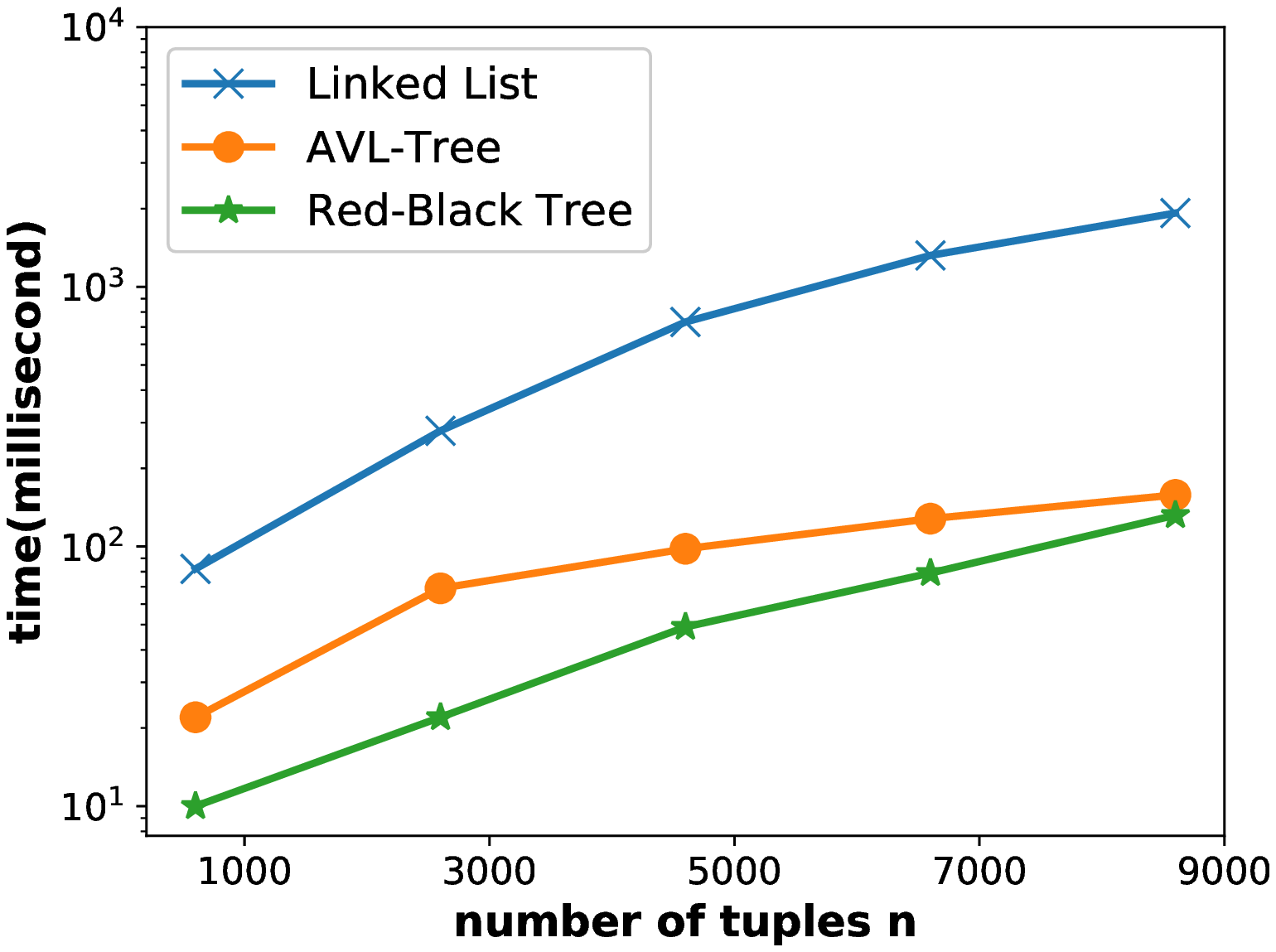}
	}
	\subfloat[NBA] { \label{insert:nba}
		\includegraphics[width=0.24\columnwidth]{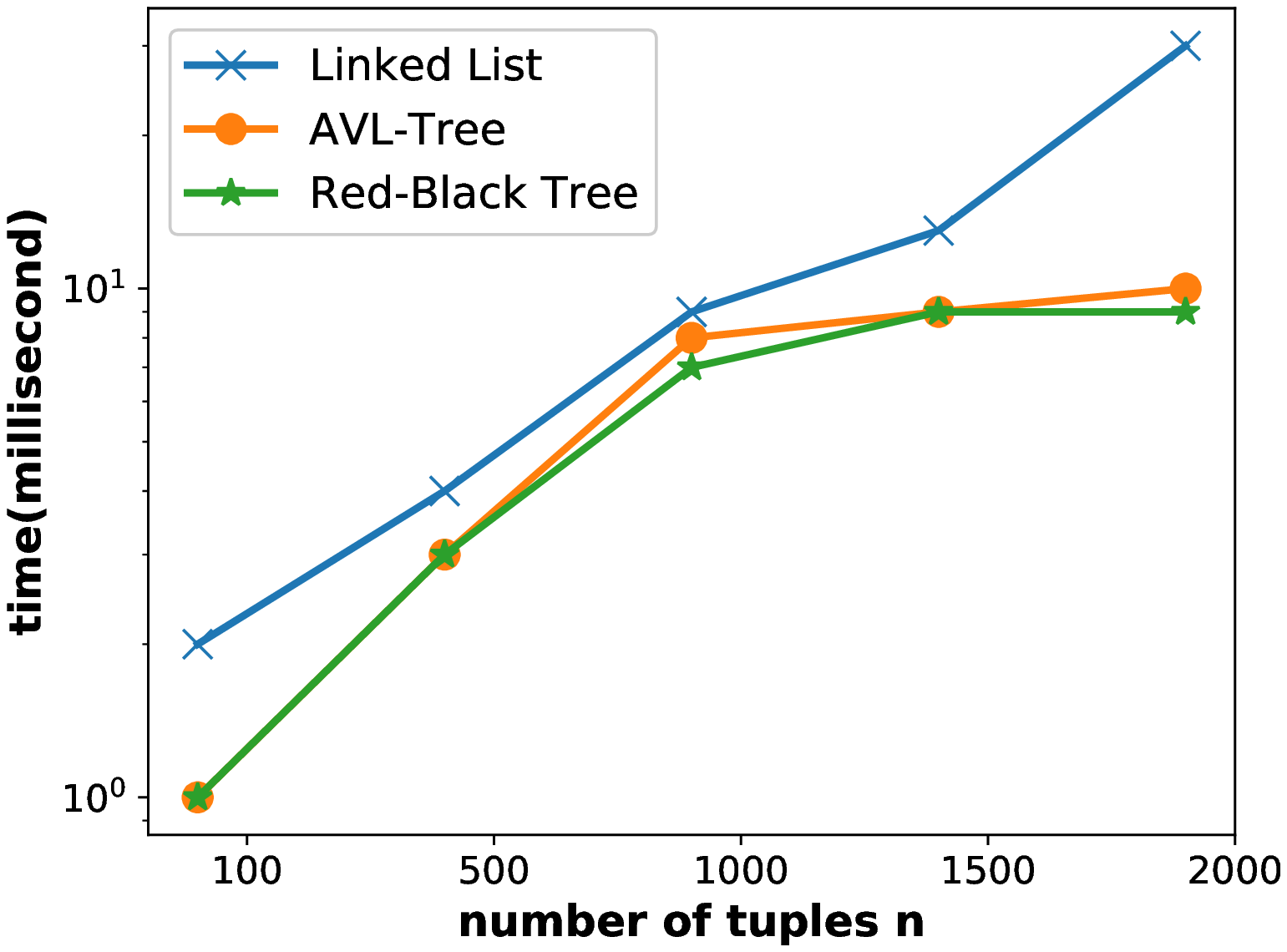}
	}
	\vspace{-2ex}\caption{Performance of different indexing structure during insertion}
	\label{insertData}
	\vspace{-2ex}
\end{figure*}

\begin{figure*}[t]
	\centering
	\subfloat[INDE] { \label{delete:list}
		\includegraphics[width=0.24\columnwidth]{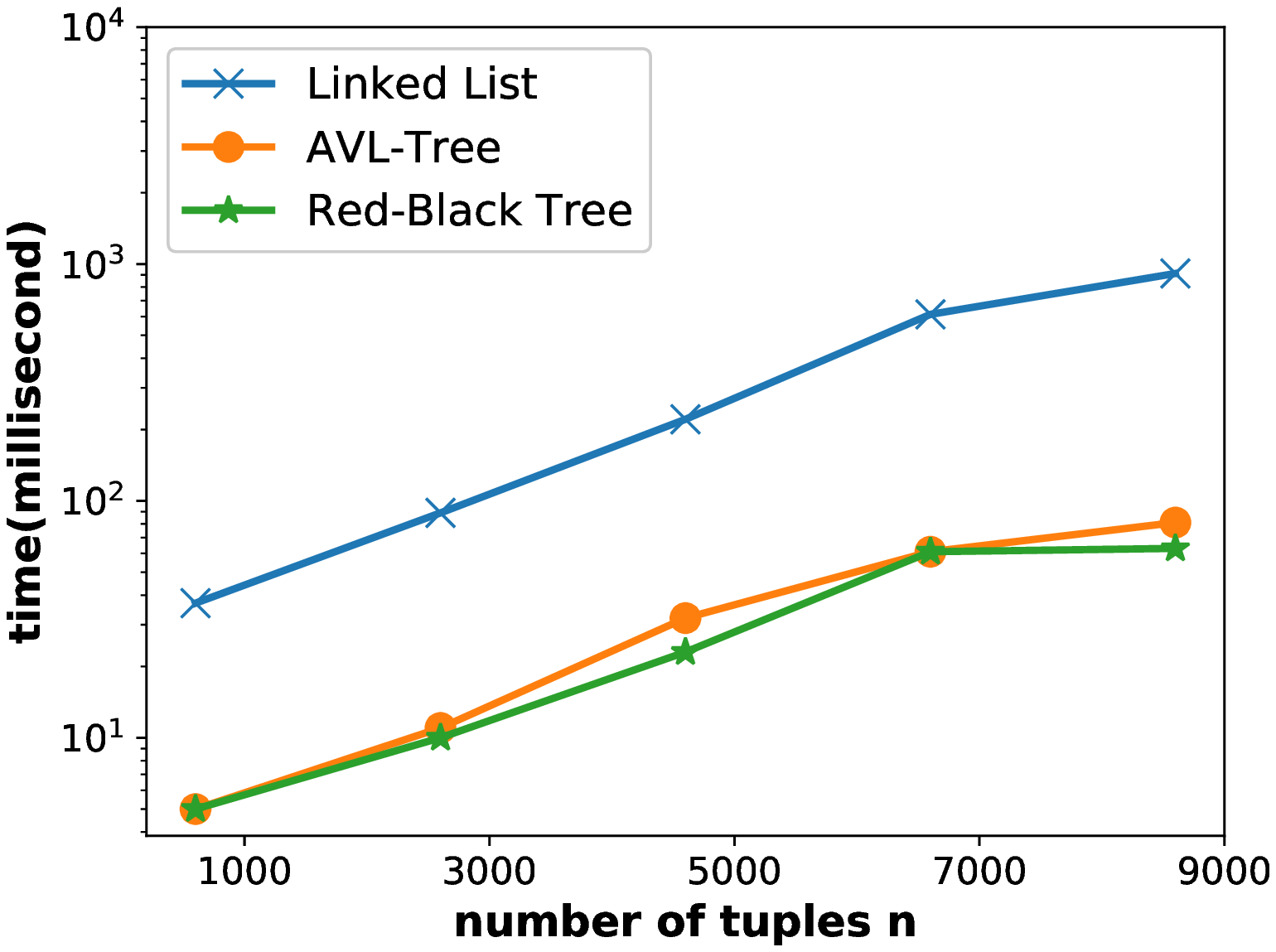}
	}
	\subfloat[CORR] { \label{delete:avl}
		\includegraphics[width=0.24\columnwidth]{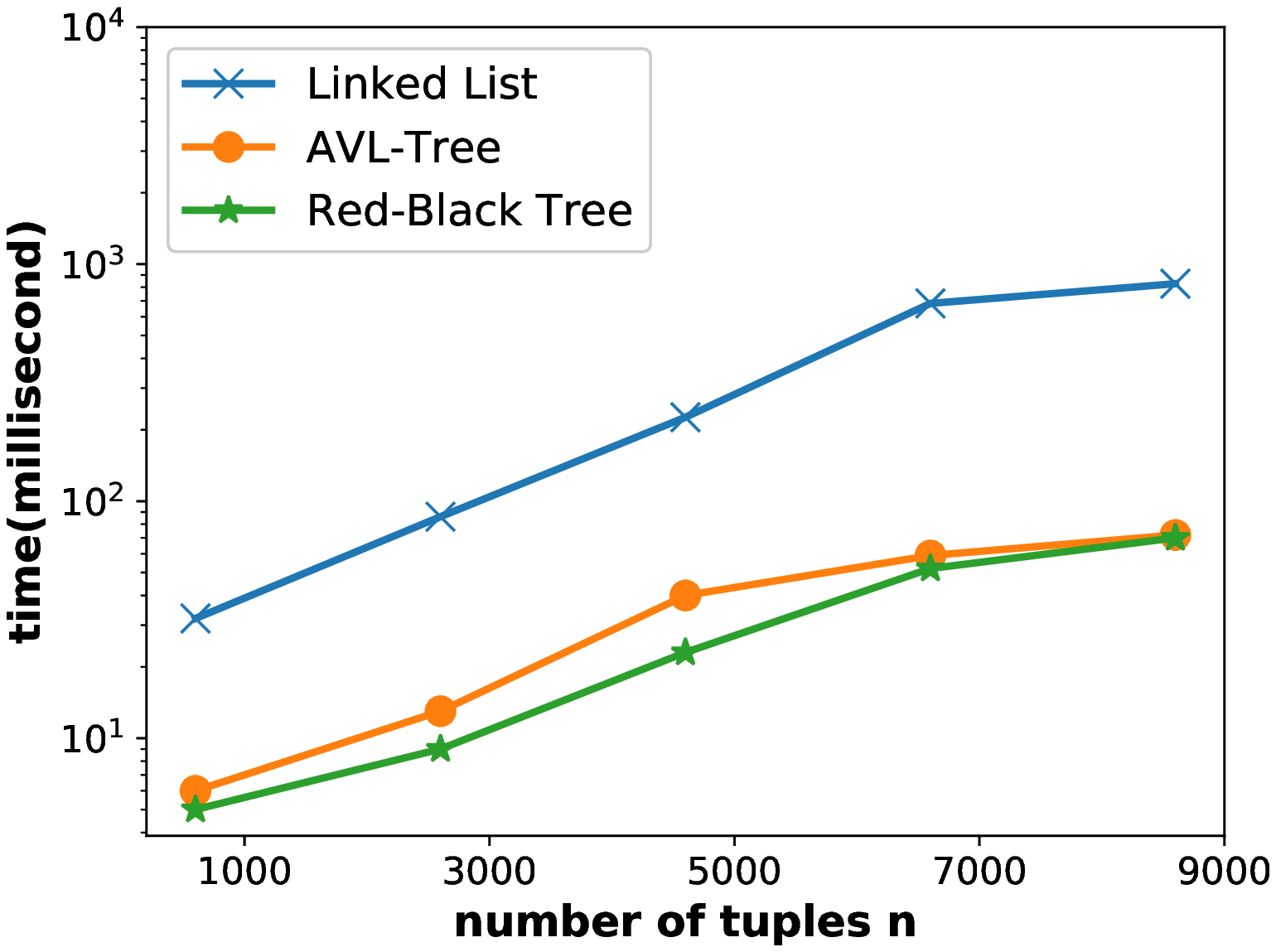}
	}
	\subfloat[ANTI] { \label{delete:rbt}
		\includegraphics[width=0.24\columnwidth]{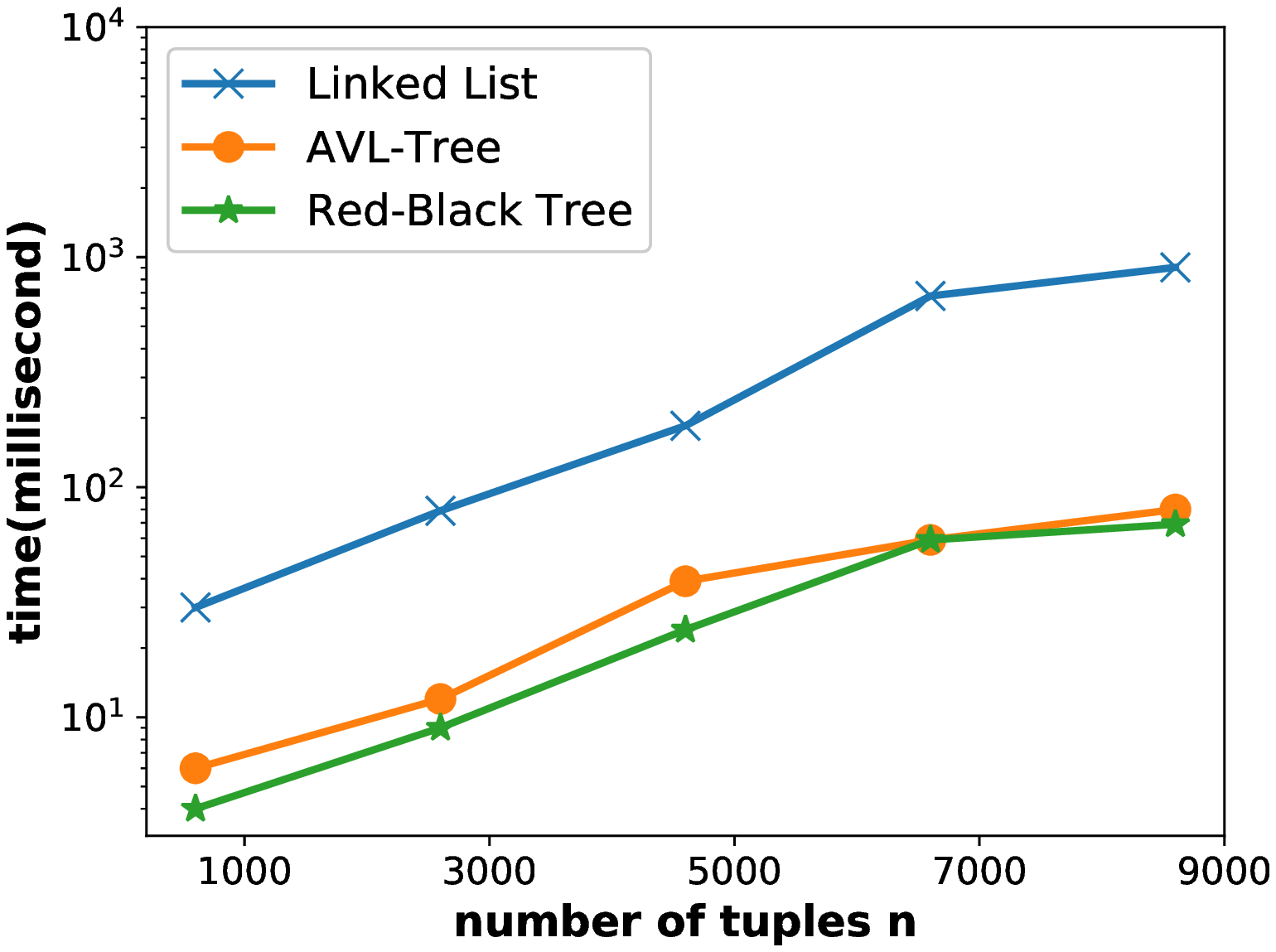}
	}
	\subfloat[NBA] { \label{delete:nba}
		\includegraphics[width=0.24\columnwidth]{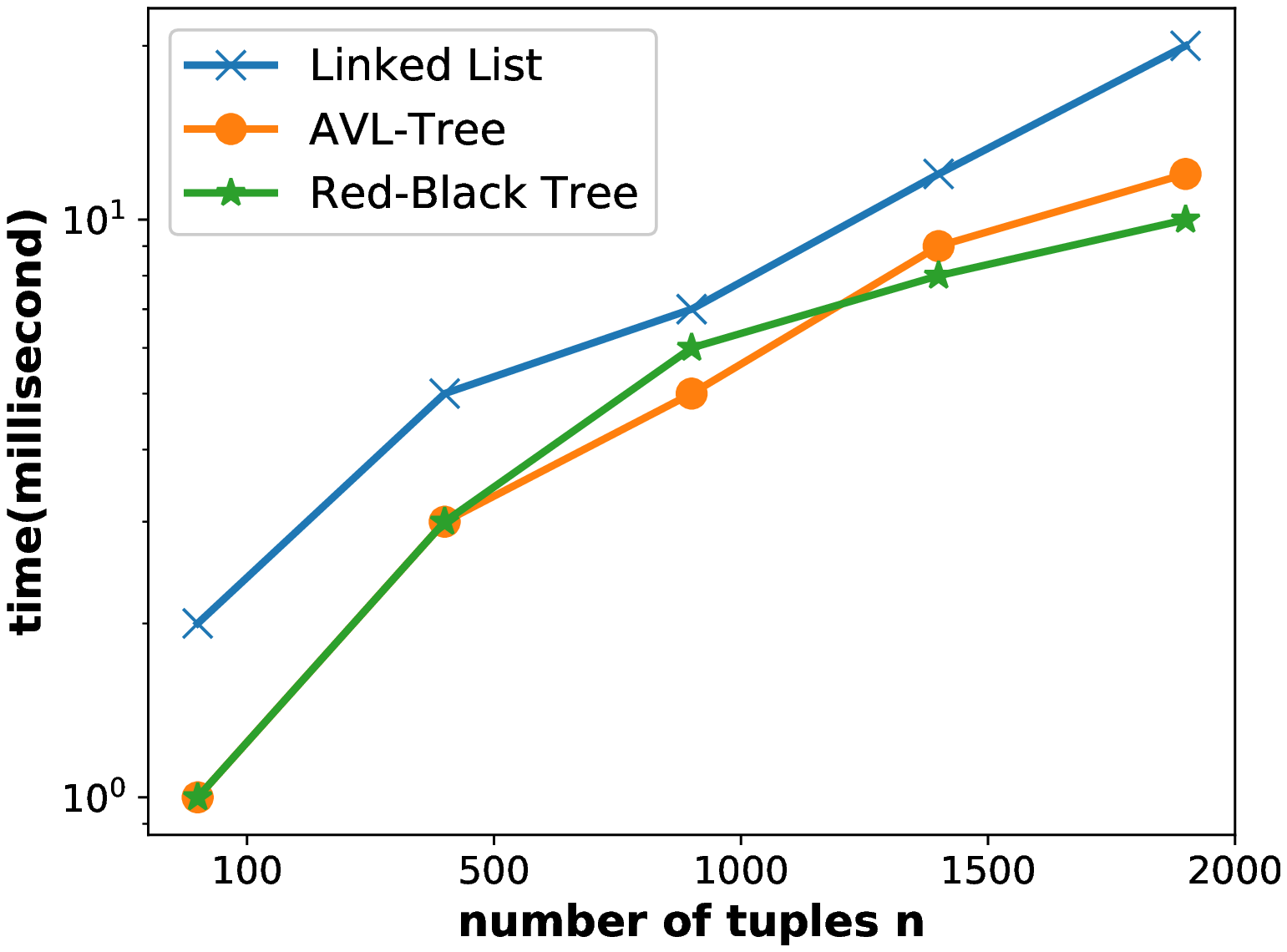}
	}
	\vspace{-1ex}\caption{Performance of different indexing structure during deletion}
	\label{deleteData}
	\vspace{-2ex}
\end{figure*}

\vspace{-1ex}\section{Experimental Study}
\label{sec:Experiments}

In this section, we evaluate the performance and scalability of \textsc{scale} under different parameter settings over four datasets, including both real-world and synthetic ones. We also compare our model with another baseline, namely BSSP~\cite{Liu2017Secure,DBLP:journals/corr/abs-1806-01168}, which is the only solution for secure dynamic skyline query.

\vspace{-1ex}\subsection{Experiment Settings}

All algorithms are implemented in C, and tested on the platform with a 2.7GHz Intel Core i5 processor and 8GB of memory running MacOS. We use both synthetic datasets and a real-world dataset in our experiments. In particular, we generated independent (INDE), correlated (CORR), and anti-correlated (ANTI) datasets following the seminal work in \cite{borzsony2001skyline}. In line with~\cite{Liu2017Secure,DBLP:journals/corr/abs-1806-01168}, we also adopt a dataset that contains 2500 NBA players who are league leaders of playoffs\footnote{It is acquired from \url{https://stats.nba.com/alltime-leaders/?SeasonType=Playoffs}.}. Each player is associated with six attributes that measure the player's performance: Points, Offensive Rebounds, Defensive Rebounds, Assists, Steals, and Blocks.

\vspace{-1ex}\subsection{Performance Results}
In this subsection, we evaluate our protocols by varying the number of tuples ($n$), the number of dimensions ($d$), the ORE block setting, and the length of key ($K$).

\btitle{Varying the number of tuples.} Fig.~\ref{impactn} shows the time cost by varying the number of tuples, namely $n$, on the four datasets. In this group of experiments, we fix the number of dimensions, ORE block size and key length as 3, 16 and 256, respectively. We observe that for all datasets, the time cost increases almost linearly with respect to $n$. This phenomenon is consistent with our complexity study in Section~\ref{ssec:compcompana}. Notably, for the real-world dataset (\ie NBA), the query response time is less than $0.2$ seconds, which is efficient enough in practice. Compared to the state-of-the-art~\cite{Liu2017Secure}, \textsc{scale} is more than 3 orders of magnitude faster. In the following, we shall fix $n=2500$ and focus on evaluating the effects of the three parameters in our scheme.

\btitle{Impact of $d$.} Fig.~\ref{impact:m} shows the time cost for different $d$ on the four datasets, where we fix the ORE block size and key length as 16 and 256, respectively. For all datasets, as $d$ increases from 2 to 6, the response time in all four datasets increases almost exponentially as well. This fact is consistent with the ordinary dynamic skyline querying in plaintext. This is because an increase in $d$ leads to more comparison operations for the decision of dynamic dominance criteria. \eat{The time for NBA dataset is lower than the CORR dataset which suggests that the NBA data is strongly correlated.}

\btitle{Impact of ORE block.} Encrypting plaintext based on block cipher, different block sizes may take different time. Fig.~\ref{impact:b} plots the time cost by varying the block sizes used in the ORE scheme, where $d$ and $K$ are fixed as 3 and 256, respectively. As mentioned in~\cite{Lewi2016Order}, this ORE scheme leaks the first block of $\delta$-bits that differs, therefore, increasing the block size brings higher security. Observe that the response time increases slightly with respect to the size of ORE block. That indicates, higher security level in ORE has to sacrifice some response time.

\btitle{Impact of $K$.} Fig.~\ref{impact:key} shows the time cost by varying the lengths for the keys in the ORE scheme. This ORE scheme uses AES as the building block, therefore, increasing the encryption key size brings in higher security. Similar to that of block size, the response time also increases linearly with respect to the size of encryption keys. Comparing Fig.~\ref{impact:key} against \ref{impact:b}, we can make the following observations. First, increasing the security level for ORE scheme will definitely sacrifice some efficiency. Second, the key length in AES exhibits more significant impact on the efficiency comparing to that of ORE block size.

\btitle{Maintenance cost.} We extensively perform another group of experiments to test the performance by adopting different index structures for managing the encrypted pairs of sums whenever insertion or deletion occurs. In particular, we compare three different structures including Linked List, Red-black Tree and AVL-Tree. Notably, as~\cite{Liu2017Secure,DBLP:journals/corr/abs-1806-01168} does not support modifications over existing database records, this method is not taken into account within this group of experiments.

Firstly, for each of the datasets, we sequentially insert (encrypting and uploading) new records into the database and evaluate the average response time for insertion of the corresponding ciphertext into existing indices, which includes both the time for searching the targeted position for insertion within the Lists or Trees as well as re-balancing the Trees, if needed, afterwards. The average response times, which are taken from 10 independent runs, for inserting different number of records using the three index structures are shown in Fig.~\ref{insertData}. In all the test for each dataset and index structure, we fix the parameters as $d=3, block=16, K=256$. Through the results in all the datasets, it is obvious that Linked List performs the worst, with about one order of magnitude larger running time than the other two Tree-based competitors. Both AVL-Tree and Red-Black Tree exhibit similar performance, much better than the Linked List. All these phenomenon is consistent with the complexities for these structures shown in Table~\ref{tb:insertAndDeleteTC}.  

Secondly, given all the existing datasets encrypted and stored using \textsc{scale}, we further test the response time for deletion of arbitrary records. Similar to the previous experiments, the response time includes both the time for searching the targeted position for deletion within the Lists or Trees as well as re-balancing the Trees, if needed, afterwards. The average response times, which are taken from 10 independent runs, for deleting different number of records using the three index structures are shown in Fig.~\ref{deleteData}. We also fix the parameters as $d=3, block=16, K=256$ throughout all the datasets. According to the results in all the datasets, Linked List performs the worst again, with about one order of magnitude larger running time than the other two Tree-based competitors. Both AVL-Tree and Red-Black Tree exhibit similar performance. The cost for all the methods increase with respect to the number of deleted records.

Notably, throughout all the datasets in both Fig.~\ref{insertData} and~\ref{deleteData}, Red-Black Tree is slightly better than AVL-Tree for inserting new records or deleting existing ones, as it do not need to strictly balancing the tree after each insertion or deletion. However, unbalanced tree structure will inevitably lead to worse query performance. Hence, there is a trade-off between the maintenance cost and query cost in selecting a solution from the two different tree-based index structure. In particular, when query performance is more sensitive, AVL-Tree should be adopted (all other experiments in this section are conducting based on this method); when maintenance cost is more sensitive, Red-Black Tree should be a better choice.

\vspace{-1ex}\section{Conclusion}
\label{sec:Conclusion}
In this paper, we have presented a new framework called \textsc{scale} to address the secure dynamic skyline query problem in the cloud platform. A distinguishing feature of our framework is the conversion of the requirement of both subtraction and comparison operations to only comparisons. As a result, we are able to use ORE to realize dynamic domination protocol over ciphertext. Based on this feature, we built \textsc{scale} on top of BNL. In fact, our framework can be easily adapted to other plaintext dynamic skyline query models. We theoretically show that the proposed scheme is secure under our system model, and is efficient enough for practical applications. Moreover, there is only one interaction between user and the cloud, which minimizes the communication cost and corresponding threats. Besides, we also present a mechanism for modification over existing stored records. Based on the proposed mechanism, insertion, deletion and update of records can be all efficiently supported by \textsc{scale}. Experimental study over both synthetic and real-world datasets demonstrates that \textsc{scale} improves the efficiency with at least three orders of magnitude compared to the state-of-the-art method. As part of our future work, we plan to further enhance the security of our scheme and explore how the scheme can be adapted to support other variations for skyline query.


\vspace{-1ex}\appendix
\section{Proof of Theorem~\ref{thm:keynum}}\label{apxa}
As shown in Fig.~\ref{add}, the relations for sums in the same class (\eg line) can be inferred easily from $E(P)$. Obviously, $n$ elements can be divided into $\kappa$ groups in the way shown in Fig.~\ref{add}. In detail, given $n$ elements, we can organize all the paired summations of them into a matrix, whose row/column correspond to each element and are arranged according to the descending order of the elements. Each entry is the sum of the corresponding row and column. As it is a symmetrical matrix, all the paired sums can be found in the upper-right corner without the diagonal. Obviously, all the entries along the borders (upper and right) of the corner belong to the same \textit{OOC}. By removing these entries from the matrix, we can iteratively find several inner borders (\ie \textit{OOC}s). Obviously, the first \textit{OOC} contains $2n-3$ entries, and each subsequent class has 4 entries less than previous one. Hence, the total number of classes is $\kappa=\lceil\frac{2*n-3}{4}\rceil$. Assuming that we use $\kappa-\epsilon$ secret keys for encryption, there must exist two \textit{OOC}s sharing the same key. Then, the cloud may infer the distribution of values in plaintext tuples in the aforementioned way. In this regard, besides the order, the distribution of the values in a dimension is also leaked, which deviates from the security model defined in Section~\ref{ssec:secmodel}. Therefore, to realized the predefined security requirements, the minimum number of keys for pairs of sums in a dimension should be at least $\kappa$.\EndOfProof

\vspace{-1ex}\section{Proof of Theorem~\ref{theorem2}}\label{apxb}
Let the leakage function collection be
$$\mathcal{L} = (\mathcal{L}_{Encrypt}, \mathcal{L}_{Query}, \mathcal{L}_{Insert}, \mathcal{L}_{Delete}).$$

\noindent\emph{Assumption.} The real game ${\rm Game}_{\mathcal{R},\mathcal{A}}$ strictly follows the presented \textsc{scale} construction. In the ideal game ${\rm Game}_{\mathcal{S},\mathcal{A}}$, hush functions are replaced by random oracles. Concretely, $\mathcal{S}$ maintains several hash tables for the adopted ORE scheme and another hash table consisting of tuples $(in, out)\in (\{0,1\}^{*} \times \{0,1\}^{*}) \times \{0,1\}^{\lambda}$. The first hash tables work as~\cite{Lewi2016Order} states, and the second hash table works as shown in Algorithm~\ref{Alg:RO}. Specifically, $h$ in Section~\ref{section:AccessingPair} is replaced by $RO$ which works as shown in Algorithm~\ref{Alg:RO}. The rest part of ideal secure skyline query follows the presented \textsc{scale} construction.

\begin{algorithm}[h]
	\footnotesize
	\caption{Work flow of random oracle (RO).}
	\label{Alg:RO}
	\begin{algorithmic}[1]
		\REQUIRE $in$
		\ENSURE $out$
		\IF{$\exists (in_i, out_i) \in RO$ such that $in_i=in$}
		\STATE Return $out_i$;
		\ELSE {}
		\STATE $out \xleftarrow{R} \{0, 1\}^{\lambda}$;
		\STATE Add $(in, out)$ into $RO$;
		\STATE Return $out$;
		\ENDIF
	\end{algorithmic}
\end{algorithm}

\noindent\emph{Analysis.} Notably, the leakage function of the adopted ORE scheme is $\mathcal{L}_{BLK}$. Then, by introducing the conclusion in~\cite{Lewi2016Order}, the advantage for any probabilistic polynomial-time adversary breaking ORE is
\begin{equation*}
	\begin{split}
		|{\rm Pr}&[{\rm Game}_{\mathcal{R},\mathcal{A}(\lambda)}^{ORE}=1]-{\rm Pr}[{\rm Game}_{\mathcal{S}, \mathcal{A}(\lambda)}^{ORE} = 1]|\leq {\rm negl}^{ORE}(\lambda).
	\end{split}
\end{equation*}

By convention, the security and information leakage is derived fourfold as follows. In the following convention, we begin by the case that $p_i$ is one-dimensional. Then, the conclusion for multi-dimensional can be facilely generalized since the secret keys for each dimension are completely independently selected.

\btitle{Encrypt$_{\mathcal{S}}$}. In the process of database encryption, the database is originally sorted in ascending order. 

On one hand, for any $p_i,p_j \in P$ and $i \neq j$, the result of $p_i+p_j$ is encrypted by ORE scheme. During the encryption, $\lceil\frac{2*n-3}{4}\rceil$ secret keys are independently selected to minimize the information leakage. By convention of the encryption for pairs of tuples as illustrated in Fig.~\ref{Indexes}, there are no valid encrypted pairs of tuple such that $p_i+p_j<p_k+p_{\ell}$ and $i< k< j <\ell$. In such case, for each \textit{OOC} of encrypted pairs of tuples, there is no effective algorithm to infer any distributions in a single \textit{OOC}. The leakage function for $t$-th \textit{OOC} is $\mathcal{L}_{BLK}({X})$ such that $X=\{\mathsf{Enc}(p_t+p_i)|t<i<n-t+1\}\cup\{\mathsf{Enc}(p_i+p_{(n-t+1)})|t<i<n-t+1\}$. Additionally, all \textit{OOC}s are encrypted with independent secret keys. Hence, the complete leakage function is $\mathcal{L}_{BLK}(X)$, where $X=\cup_{t=1}^{(\lceil\frac{2*n-3}{4}\rceil)}(Y_t)$ and $Y_t=\{\mathsf{Enc}(p_t+p_j)|t<j<n-t+1\}\cup\{\mathsf{Enc}(p_j+p_{(n-t+1)})|t<j<n-t+1\}$. Because, such a leakage cannot put any effect on breaking ORE, the advantage for any probabilistic polynomial-time adversary breaking \textsc{scale} from this point of view is less than or equal to ${\rm negl}^{ORE}(\lambda)$.

On the other hand, the searching process is speedup by introducing $h$ as shown in Section~\ref{section:AccessingPair}. In the simulation, $h$ is replaced by $RO$. The input of $RO$ is $\mathsf{Enc}(p_i+p_j)$, and the output is a random number with $\lambda$ bits. If there exist $k$ and $\ell$ such that $p_k+p_{\lambda}=p_i+p_j$, the probability for $RO(\mathsf{Enc}(p_i+p_j)) \neq RO(\mathsf{Enc}(p_i+p_j))$ is at most ${\rm poly}(\lambda)/2^{\lambda}$. The largest \textit{OOC} contains $2n-3$ ciphertexts, so the probability of breaking \textsc{scale} from this perspective is at most $(2n-3){\rm poly}(\lambda)/2^{\lambda}$.

 The probability of breaking \textsc{scale} is ${\rm negl}^{ORE}(\lambda) + (2n-3){\rm poly}(\lambda)/2^{\lambda}$ during database encryption. The leakage function is $\mathcal{L}_{BLK}(X)$, where $X=\cup_{t=1}^{(\lceil\frac{2*n-3}{4}\rceil)}(Y_t)$ and $Y_t=\{\mathsf{Enc}(p_t+p_j)|t<j<n-t+1\}\cup\{\mathsf{Enc}(p_j+p_{(n-t+1)})|t<j<n-t+1\}$.

\btitle{Query$_{\mathcal{S}}$}. During query, according to the query request encryption as shown in Algorithm~\ref{encryptquery}, for each chain of encrypted pairs of tuples, $q$ and $2q$ are encrypted and temporarily inserted into the chain. Since cloud is assumed be semi-honest and will store all temporary results (\ie $\mathsf{Enc}(q)$ and $\mathsf{Enc}(2q)$), then the leakage function is  $\mathcal{L}_{BLK}({X}\cup \mathsf{Enc}(q) \cup \mathsf{Enc}(2q))$ such that $X=\{\mathsf{Enc}(p_t+p_i|t<i<n-t+1)\}\cup\{\mathsf{Enc}(p_i+p_{(n-t+1)})|t<i<n-t+1\}$. Hence, the complete leakage function is $\mathcal{L}_{BLK}(X)$, where $X=\cup_{i=1}^{(\lceil\frac{2*n-3}{4}\rceil)}(Y_t\cup \mathsf{Enc}_t(q) \cup \mathsf{Enc}_t(2q))$, $Y_t=\{\mathsf{Enc}(p_t+p_j)|t<j<n-t+1\}\cup\{\mathsf{Enc}(p_j+p_{(n-t+1)})|t<j<n-t+1\}$ and the advantage for any probabilistic polynomial-time adversary breaking \textsc{scale} from this point of view is still $\le{\rm negl}^{ORE}(\lambda)$. Additionally, from the view of random oracle, due to that there is no additional data that are inserted into the ciphertext storing structure, the probability of breaking \textsc{scale} from this point is still at most $(2n-3){\rm poly}(\lambda)/2^{\lambda}$. In a word, during query, the probability of breaking \textsc{scale} is ${\rm negl}^{ORE}(\lambda) + (2n-3){\rm poly}(\lambda)/2^{\lambda}$, and the leakage function is $\mathcal{L}_{BLK}(X)$, where $X=\cup_{i=1}^{(\lceil\frac{2*n-3}{4}\rceil)}(Y_t\cup \mathsf{Enc}_t(q) \cup \mathsf{Enc}_t(2q))$ and $Y_t=\{\mathsf{Enc}(p_t+p_j)|t<j<n-t+1\}\cup\{\mathsf{Enc}(p_j+p_{(n-t+1)})|t<j<n-t+1\}$.

\btitle{Insert$_{\mathcal{S}}$}. During insertion, there will be $n$ encrypted pairs of tuples that are inserted into the chains for only a single tuple. The conclusion can be derived by deducing the conclusion for \textbf{Encrypt$_{\mathcal{S}}$}. During insertion, the probability of breaking \textsc{scale} is ${\rm negl}^{ORE}(\lambda) + (2n-1){\rm poly}(\lambda)/2^{\lambda}$, and the leakage function is $\mathcal{L}_{BLK}(X)$, where $X=\cup_{t=1}^{(\lceil\frac{2*n-1}{4}\rceil)}(Y_t)$ and $Y_t=\{\mathsf{Enc}(p_t+p_j)|t<j<n-t+2\}\cup\{\mathsf{Enc}(p_j+p_{(n-t+2)})|t<j<n-t+2\}$.

\btitle{Delete$_{\mathcal{S}}$}. During deletion, there will be $n$ encrypted pairs of tuples that are removed from the chains for each single tuple. However, a semi-honest cloud may not remove such pairs. So, the probability of breaking \textsc{scale} and the leakage function should be the same as that in \textbf{Encrypt$_{\mathcal{S}}$}.
Therefore, for one-dimensional data, we can conclude that the leakage function during different phases are as follows.
\begin{align*}
	\mathcal{L}_{Encrypt}= \mathcal{L}_{BLK}(X^{(n)}),
	\mathcal{L}_{Query}= \mathcal{L}_{BLK}(X^{(n)'}),\\
	\mathcal{L}_{Insert}= \mathcal{L}_{BLK}(X^{(n+1)}),
	\mathcal{L}_{Delete}= \mathcal{L}_{BLK}(X^{(n)})
\end{align*}	
where $X^{(n)}=\cup_{t=1}^{(\lceil\frac{2*n-3}{4}\rceil)}(Y_t)$, $X^{(n)'}=\cup_{t=1}^{(\lceil\frac{2*n-3}{4}\rceil)}(Y_t\cup \mathsf{Enc}_t(q) \cup \mathsf{Enc}_t(2q))$ and $Y_t=\{\mathsf{Enc}(p_t+p_j)|t<j<n-t+1\}\cup\{\mathsf{Enc }(p_j+p_{(n-t+1)})|t<j<n-t+1\}$. Then, the advantage for breaking \textsc{scale} is at most ${\rm negl}^{ORE}(\lambda) + (2n-1){\rm poly}(\lambda)/2^{\lambda}$.

In fact, each record may have $d$ dimensions. The secret keys for each dimension are selected independently, so the leakage function is the union of leakage for each dimension, and the advantage for breaking \textsc{scale} is $d$ times larger than that for each dimension.

\noindent\emph{Conclusion.} The leakage function in different phases are as follows.
\begin{align*}
	\mathcal{L}_{Encrypt}= \mathcal{L}_{BLK}(\cup_{k=1}^{d}X_k^{(n)}),
	\mathcal{L}_{Query}= \mathcal{L}_{BLK}(\cup_{k=1}^{d}X_k^{(n)'}),\\
	\mathcal{L}_{Insert}= \mathcal{L}_{BLK}(\cup_{k=1}^{d}X_k^{(n+1)}),
	\mathcal{L}_{Delete}= \mathcal{L}_{BLK}(\cup_{k=1}^{d}X_k^{(n)})
\end{align*}		
where $X_k^{(n)}=\cup_{t=1}^{(\lceil\frac{2*n-3}{4}\rceil)}(Y_t^{k})$, $X_{k}^{(n)'}=\cup_{t=1}^{(\lceil\frac{2*n-3}{4}\rceil)}(Y_t^{k}\cup \mathsf{Enc}_t(q) \cup \mathsf{Enc}_t(2q))$ and $Y_t^k=\{\mathsf{Enc}(p_t[k]+p_j[k])|t<j<n-t+1\}\cup\{\mathsf{Enc}(p_j[k]+p_{(n-t+1)}[k])|t<j<n-t+1\}$. Therefore, the advantage for breaking \textsc{scale} is at most $d\cdot({\rm negl}^{ORE}(\lambda) + (2n-1){\rm poly}(\lambda)/2^{\lambda})$.\EndOfProof


\bibliographystyle{splncs04}
\bibliography{secureSkyline}

\end{document}